\newcommand{\cut}[1]{}
\newcommand{\ie}{\emph{i.e.}\xspace}
\newcommand{\etc}{\emph{etc.}\xspace}
\newcommand{\eg}{\emph{e.g.}\xspace}
\newcommand{\Fig}[1]{Figure \ref{fig:#1}\xspace}
\newcommand{\Tab}[1]{Table \ref{tab:#1}\xspace}
\newcommand{\ngs}[1]{\textcolor{red}{#1}}
\newcommand{\SCR}{{SCR}\xspace}
\newcommand{\nop}[1]{}
\newcommand{\dispatch}{dispatch\xspace}
\newcommand{\Para}[1]{\vspace{4pt}\noindent\textbf{\textit{#1}}}
\def\compactify{\itemsep=0pt \topsep=0pt \partopsep=0pt \parsep=0pt \leftmargin=12pt}
\let\latexusecounter=\usecounter
\newenvironment{CompactEnumerate}
  {\def\usecounter{\compactify\latexusecounter}
   \begin{enumerate}}
  {\end{enumerate}\let\usecounter=\latexusecounter}
\newcommand{\srinivas}[1]{}
\newcommand{\qx}[1]{}
\newcommand{\ct}{\small \tt}
\newcommand{\Sec}[1]{\S\ref{sec:#1}\xspace}
\newcommand{\App}[1]{App.\ref{app:#1}\xspace}
\newcommand{\Alg}[1]{Alg.\ref{alg:#1}\xspace}
\newcommand{\revadd}[1]{{\textcolor{black}{#1}}}
\newcommand{\revaddlargebegin}{\color{black}}
  \renewcommand{\revadd}[1]{{\textcolor{blue}{#1}}}
  \renewcommand{\revaddlargebegin}{\color{blue}}
\newcommand{\forcameraready}[1]{}
\begin{document}
%-------------------------------------------------------------------------------

%don't want date printed
\date{}

% make title bold and 14 pt font (Latex
% default is non-bold, 16 pt)

\title{\Large \bf State-Compute Replication: Parallelizing High-Speed
  Stateful Packet Processing}

% \author{Submission \#349 (revision of \#1010 from fall'24)}
\author{
Qiongwen Xu\textsuperscript{*},
Sebastiano Miano\textsuperscript{\dag},
Xiangyu Gao\textsuperscript{\ddag},
Tao Wang\textsuperscript{\ddag},
Adithya Murugadass\textsuperscript{*},
Songyuan Zhang\textsuperscript{*},\\
Anirudh Sivaraman\textsuperscript{\ddag},
Gianni Antichi\textsuperscript{\dag,\S},
Srinivas Narayana\textsuperscript{*}
\\
\textsuperscript{*}Rutgers University,
\textsuperscript{\dag}Politecnico di Milano,
\textsuperscript{\ddag}New York University,
\textsuperscript{\S}Queen Mary University of London
}

\maketitle

\begin{abstract}

With the slowdown of Moore's law, CPU-oriented packet processing in
software will be significantly outpaced by emerging line speeds of
network interface cards (NICs). Single-core packet-processing
throughput has saturated.

We consider the problem of high-speed packet processing with multiple
CPU cores. The key challenge is state---memory that multiple packets
must read and update. The prevailing method to scale throughput with
multiple cores involves state sharding, processing all packets that
update the same state, e.g., flow, at the same core. However, given
the skewed nature of realistic flow size distributions, this
method is untenable, since total throughput is limited by
single core performance.
  
This paper introduces {\em state-compute replication}, a principle to
scale the throughput of a single stateful flow across multiple cores
using replication. Our design leverages a {\em packet history
  sequencer} running on a NIC or top-of-the-rack switch to enable
multiple cores to update state without explicit synchronization. Our
experiments with realistic data center and wide-area Internet traces
shows that state-compute replication can scale total packet-processing
throughput linearly with cores, independent of
flow size distributions, across a range of realistic packet-processing
programs.

\end{abstract}

\section{Introduction}

Designing software to handle high packet-processing loads is crucial
in networked systems. For example, software load balancers, CDN nodes,
DDoS mitigators, and many other middleboxes depend on it.  Yet, with
the slowdown of Moore's law, software packet processing has struggled
to keep up with line speeds of network interface cards (NICs), with
emerging speeds of 200 Gbit/s and
beyond~\cite{mellanox-connectx-7}. Consequently, there have been
significant efforts to speed up packet processing through better
network stack design, removing user-kernel crossings, running software
at lower layers of the stack (\eg NIC device driver), and designing
better host interconnects.

We consider the problem of scaling software packet processing by using
multiple cores on a server. The key challenge is that many
packet-processing applications are {\em stateful}, maintaining and
updating regions of memory across many packets. If multiple cores
contend to access the same memory regions, there is significant memory
contention and cache bouncing, resulting in poor performance. Hence,
the classic approach to multicore scaling is to process packets
touching distinct states, \ie flows, on different cores, hence
removing memory contention and synchronization.  For example, a load
balancer that maintains a separate backend server for each 5-tuple may
send all packets of a given 5-tuple to a fixed core, but process
different 5-tuples on different cores, hence scaling performance with
multiple cores. Many prior efforts have looked into optimizing such
sharding-oriented solutions~\cite{opennf-sigcomm14,
  split-merge-nsdi13, metron-nsdi18, rss++-conext19}, including the
recent application of automatic code parallelization
technology~\cite{automatic-parallelization-nsdi24}.

However, we believe that the existing
approaches to multi-core scaling have run their course
(\Sec{motivation-background}).  Realistic traffic workloads have
heavy-tailed flow size distributions and are highly skewed. Large
``elephant flows'' updating state on a single core will reduce total
throughput and inflate tail latencies for all packets, since they are
limited by the performance of a single CPU core. With emerging 200
Gbit/s---1 Tbit/s NICs, a single packet processing core may be too
slow to keep up even with a single elephant flow. Additionally, with
the growing scales of volumetric resource exhaustion attacks, packet
processors must gracefully handle attacks where adversaries force
packets into a single flow~\cite{maglev-nsdi16}.

This paper introduces a scaling principle, {\em state-compute
  replication (\SCR)}, that improves the software packet-processing
throughput for a {\em single, stateful flow} with additional cores,
while avoiding shared memory and
contention. \Fig{throughput-conntrack-singleflow} shows how \SCR
scales the throughput of a single TCP connection for a TCP connection
state tracker~\cite{conntrack} with more cores, when other scaling
techniques fail. A connection tracker may change its internal state
with each packet.

\begin{figure}
  %%\vspace{-5mm}
  \centering
  \includegraphics[width=0.30\textwidth]{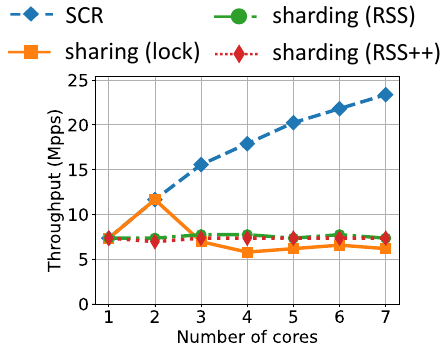}
  \vspace{-2mm}
  \caption{Scaling the throughput of a TCP connection state tracker
    for a {\em single TCP connection} across multiple cores. Sharing
    state across cores degrades performance beyond 2 cores due to
    contention. Sharding state (using RSS and
    RSS++~\cite{rss++-conext19}) cannot improve throughput beyond a
    single CPU core (\Sec{motivation-background}). In contrast,
    State-Compute Replication (\Sec{design}) provides linear scale-up
    in throughput with cores.}
  \vspace{-6mm}
  \label{fig:throughput-conntrack-singleflow}
\end{figure}

\SCR applies to any packet processing program that may be abstracted
as a deterministic finite state machine. Intuitively, as long as each
core can reconstruct the flow state by processing {\em all} the
packets of the flow, multiple cores can successfully process a single
stateful flow with zero cross-core synchronization. However, naively
getting every CPU core to process every packet cannot improve
throughput with more cores. We need to meet two additional
requirements. First, in systems where CPU usage is dominated by
per-packet work, we must preserve the total number of packets moving
through the system. Second, we must divide up {\em \dispatch}---the
CPU software labor of presenting the packets to the packet-processing
program---across multiple cores (\Sec{scr-principle}). This leads to
the scaling principle below (more formal version in
\Sec{scr-principle}):

\noindent
\textbf{State-Compute Replication (informal)}. In a system bound by
per-packet CPU work and software dispatch, {\em replicating}
the state and the computation of a stateful flow across cores
increases throughput {\em linearly} with cores, so long as we preserve
the total number of packets traversing the system.

To meet the requirements for high performance in \SCR, we design a
{\em packet history sequencer} (\Sec{operationalizing-scr}), an entity
which sees every packet and sprays packets across cores in a
round-robin fashion, while piggybacking the relevant information from
packets missed by a core on the next packet sent to that core. The
sequencer must maintain a small bounded history of packet headers
relevant to the packet-processing program.  A high-speed
packet-processing pipeline, running either on a programmable NIC or a
top-of-the-rack switch, may serve as a sequencer. We present sequencer
designs for two hardware platforms (\Sec{pipeline-sequencer}), a
Tofino switch, and a Verilog RTL module we synthesized into the
NetFPGA-PLUS reference pipeline.

We evaluated the scaling efficacy of \SCR using a suite of realistic
programs and traffic (\Sec{evaluation}).  \SCR is the only technique
we are aware of that monotonically scales the total processing
throughput with additional cores regardless of the skewness in the
arriving flow size distribution. However, no technique can continue
scaling indefinitely. In \Sec{scr-principle} and \Sec{evaluation}, we
demonstrate the limits of \SCR scaling, along with the significant
performance benefits to be enjoyed before hitting these
limits. Additionally, we show the resource usage for our sequencer's
RTL design, which meets timing at 250 MHz. We believe this design is
simple and cheap enough to be added as an on-chip accelerator
to a future NIC. We plan to release all software.
%% on today's hardware.
%% under realistic packet-processing programs and packet traffic
%% workloads.

\section{Background and Motivation}
\label{sec:motivation-background}

\subsection{High Speed Packet Processing}
\label{sec:background-high-speed-packet-processing}

This paper considers packet-processing programs that must work at
heavy network loads with quick turnarounds for packets.  We are
specifically interested in applications implementing a ``hairpin''
traffic flow: receiving a packet, processing it as quickly as
possible, and transmitting the packet right out, at the highest
possible rate. Such programs exhibit compute and working set sizes
that are smaller in comparison to full-fledged applications (running
in user space) or transport protocols (TCP stacks) running at
endpoints.

Examples of the kinds of applications we consider include (i)
middlebox workloads (network functions), such as firewalls, connection
trackers, and intrusion detection/prevention systems; and (ii)
high-volume compute-light applications such as key-value stores,
telemetry systems, and stream-based logging systems, which process
many requests with a small amount of computation per
request~\cite{nanopu-osdi21}. A key
characteristic of such applications is their need for high
packet-processing rate (which is more important than byte-processing
rate) and the fact that their performance is primarily bottlenecked by
CPU usage~\cite{netmap-atc12, katran-facebook-talk,
  cloudflare-l4drop}.

%% TODO: we should actually evaluate our techniques on such
%% applications if possible

%% TODO: Wondering if we should include a note about memory
%% usage. These apps should also have a small memory
%% footprint. Otherwise, it is difficult to function at high speeds --
%% even a single cache miss will affect performance. Also, for our
%% technique when we replicate state, the total memory footprint
%% becomes much higher, so this is even more crucial.

The performance of such applications is mission-critical in many
production systems. Even small performance improvements matter in
Internet services with large server fleets and heavy traffic. As a
specific example, Meta's Katran layer-4 load balancer~\cite{katran}
and CloudFlare's DDoS protection solution~\cite{cloudflare-l4drop}
process every packet sent to those respective services, and must
withstand not only the high rate of request traffic destined to those
services but also large denial-of-service attacks. More generally, the
academic community has undertaken significant efforts for performance
optimization of network functions, including optimization of the
software frameworks~\cite{opennf-sigcomm14}, designing language-based
high-performance isolation~\cite{netbricks-osdi16}, and developing
custom hardware offload solutions~\cite{floem-osdi18}.

%% TODO: add citations for the last sentence above from the NF
%% literature

In this paper, we consider applications developed within high-speed
packet processing software frameworks.  Given the slowdown of Moore's
law and the end of Dennard scaling, the software packet-processing
performance of single CPU cores has saturated. Even expert developers
must work meticulously hard to improve per-core throughput by small
margins (like 10\%)~\cite{xdp-conext18, accelerate-with-af-xdp,
  kubernetes-load-balancing-with-xdp,
  netronome-smaller-programs-greater-performance}. The community has
pursued various efforts to improve performance, such as
re-architecting the software stack to make efficiency
gains~\cite{netmap-atc12, dpdk, snap-google-sosp19}, introducing
stateless hardware offloads working in conjunction with the software
stack~\cite{segmentation-offloads, checksum-offloads}, full-stack
hardware offloads~\cite{chelsio, tonic-nsdi20}, and kernel
extensions~\cite{kernel-bpf-documentation}.  This paper considers
software frameworks that modify the device driver to enable
programmable and custom functionality to be incorporated by developers
at high performance with minimal intervention from the existing kernel
software stack~\cite{xdp-conext18}.

Specifically, we study performance and evaluate our techniques in the
context of kernel extensions implemented using the eXpress Data Path
(XDP/eBPF~\cite{xdp-conext18}) within the Linux kernel. In
\Sec{design}, we will discuss how our observations and principles
apply more generally to other high-speed software packet-processing
frameworks, including those written with user-space libraries like
DPDK~\cite{dpdk}.

%% TODO: there needs to be some justification, or experiment, to show
%% why our principles will hold with DPDK. 

\subsection{Parallelizing Stateful Packet Processing}
\label{sec:background-parallelizing-stateful-packet-processing}

This paper considers the problem of parallelizing high-speed packet
processing programs across multiple cores. The key challenge is
handling {\em state}: memory that must be updated in a sequential
order upon processing each packet to produce a correct
result. Consider the example of the {connection
  tracker}~\cite{conntrack}, a program which identifies the TCP
connection state (\eg SYN sent, SYN/ACK sent, \etc) using packets
observed from both directions of a TCP connection. Each packet in the
connection may modify the internal connection state maintained by the
program. There are two main techniques used to parallelize such
programs across cores.

\Para{Shared state parallelism.} One could conceive a parallel
implementation that (arbitrarily) splits packets across multiple
cores, with explicit synchronization or retry logic guarding access to the shared
memory, \ie the TCP connection state, to ensure a correct result.

Shared-state parallelism works well when the contention to shared
memory is low. Specifically, shared-memory scaling could work well
when (i) packets of a single flow arrive slowly enough, \eg if there
are a large number of connections with a roughly-equal packet arrival
rate, or (ii) when there are efficient implementations available for
synchronization or atomic updates in
software~\cite{conntrack-hpsr21} or
hardware~\cite{gcc-builtin-atomics,
  gcc-legacy-builtin-atomics}. However, neither of these conditions
are generally applicable. Many flow size distributions encountered in
practical networks are highly skewed~\cite{microsoft-network-sigcomm10,
  internet-flow-rates-sigcomm02} or exhibit highly bursty
behavior~\cite{facebook-datacenter-study-sigcomm15}, resulting in
significant memory contention if packets from the heavier flows are
spread across cores. Further, the state update operation in many
programs, including the TCP connection tracker, are too complex to be
implemented directly on atomic hardware, since the latter only
supports individual arithmetic and logic operations (like
fetch-add-write). Our evaluation results (\Sec{evaluation}) show that
the performance of shared-state multicore processing plummets with
more cores under realistic flow size distributions.

\Para{Sharded (shared-nothing) parallelism.} Today, the predominant
technique to scale stateful packet processing across multiple cores is
to process packets that update the same memory at the same core,
sharding the overall state of the program across cores. Such sharding
is usually achieved through techniques like Receive Side Scaling
(RSS~\cite{rss}), available on modern NICs, to direct packets from the
same flow to the same core, and using shared-nothing data structures
on each core.

However, sharding suffers from a few disadvantages. First, it is not
always possible to avoid coordination through sharding.  There may be
parts of the program state that are shared across all packets, such as
a list of free external ports in a Network Address Translation (NAT)
application.  On the practical side, RSS implementations on today's
NICs partition packets across cores using a limited number of
combinations of packet header fields. For example, a NIC may be
configured to steer packets with a given source and destination IP
address to a fixed core. However, the granularity at which the
application wants to shard its state---for example, a key-value cache
may seek to shard state by the key requested in the payload---could be
infeasible to implement with the packet headers that are usable by RSS
at the NIC~\cite{automatic-parallelization-nsdi24}.

Second, sharding state may create load imbalance across cores if some
flows are heavier or more bursty than others, creating hotspots on
single CPU cores.  Skewed flow size
distributions~\cite{microsoft-network-sigcomm10}, bursty flow
transmission patterns~\cite{facebook-datacenter-study-sigcomm15}, and
denial of service attacks~\cite{maglev-nsdi16} create conditions ripe
for such imbalance. The research community has investigated solutions
to balance the packet processing load by migrating flow shards across
CPU cores~\cite{rss++-conext19, automatic-parallelization-nsdi24}.
However, the efficacy of re-balancing
is limited by the granularity at which flows can be migrated across
cores. As we show in our evaluation (\Sec{evaluation}), the throughput
of the heaviest and most bursty flows is still limited by a single CPU
core, which in turn limits the total achieved throughput. Another
alternative is to evenly spray incoming packets across
cores~\cite{spraying-in-middleboxes-hotnets18, rpcvalet-asplos19},
assuming that only a small number of packets in each flow need to
update the flow's state. If a core receives a packet that must update
the flow state, the packet is redirected to a designated core that
performs all writes to the state for that flow.  However, the
assumption that state is mostly read-only is not universal, \eg a TCP
connection tracker may update state potentially on every
packet. Further, packet reordering at the designated core can lead to
incorrect results~\cite{rss++-conext19}.

\subsection{Goals}
\label{sec:goals}
\label{sec:motivation}

Given the drawbacks of existing approaches for multi-core scaling
discussed above
(\Sec{background-parallelizing-stateful-packet-processing}), we seek a
scaling technique that achieves the following goals:

\begin{CompactEnumerate}
\item {\em Generic stateful programming.} The technique must produce
  correct results for general stateful updates, eschewing solutions
  that only work for ``simple'' updates (\ie fitting hardware
  atomics) %% , or only programs compatible with NIC
  %% configuration,
  or only update state for a small number of packets per flow.
\item {\em Skew independence.} The scaling technique should improve
  performance independent of the incoming flow size distribution or
  how the flows access the state in the program.
\item {\em Monotonic performance gain.} Performance should
  improve, not degrade or collapse, with additional cores.
\end{CompactEnumerate}

\section{State-Compute Replication (\SCR)}
\label{sec:design}

In \Sec{scr-principle}, we present scaling principles for multi-core
stateful packet processing, to meet the goals in
\Sec{goals}. In \Sec{operationalizing-scr} through
\Sec{packet-loss-nondeterminism}, we show how to operationalize these
principles.

\subsection{Scaling Principles}
\label{sec:scr-principle}

To simplify the discussion, suppose the packet-processing program is
{\em deterministic}, \ie in every execution, it produces the same
output state and packet given a fixed input state and packet (we relax
this assumption in \Sec{packet-loss-nondeterminism}).

\Para{Principle \#1 (Replication for correctness).} Sending every
packet reliably to every core, and {\em replicating} the
state and computation on every core, produces the correct output state
and packet on every core {\em with no explicit cross-core
  synchronization,} regardless of how the state is accessed by packets.

This principle asks us to treat each core as one replica of a
replicated state machine running the deterministic packet-processing
program. Each core processes packets in the same order, without
missing any packets. With each incoming packet, each core updates a
private copy of its state, which is equal to the private copy on every
other core. There is no need to synchronize explicitly. Further,
replication provides the benefit that the workload across cores is
even regardless of how the state is accessed by packets, \ie
skew-independent.

One way to apply this principle naively is to broadcast every packet
received {\em externally} on the machine to every core: with $k$
cores, for each external packet, the system will process $k$ {\em
  internal} packets, due to $k$-fold packet duplication. However,
artificially increasing the number of packets processed by the system
will significantly hurt performance.
%
%% \ngs{Insert picture with internal and external rates after principle 1
%%   and then after principle 2.}
%
In CPU-bound packet processing, smaller packets typically require the
same computation that larger packets do. That is, the total amount of
work performed by the system is proportional to the
packets-per-second offered, rather than the
bits-per-second~\cite{netmap-atc12, xdp-conext18}.

So how should one use replication for multi-core scaling?
Understanding the dominant components of the per-packet CPU work in
high-speed packet-processing frameworks offers insight.  There are two
parts to the CPU processing for each packet after the packet reaches
the core where it will be ultimately processed: (i) {\em dispatch},
the CPU/software labor of presenting the packet to the user-developed
packet-processing program, and signaling the packet(s) emitted by the
program for transmission by the NIC; and (ii) the {\em program
  computation} running within the user-developed program
itself. Dispatch often dominates the per-packet CPU
work~\cite{xdp-conext18}.

%% Many prior efforts use the term dispatch to refer to just the NIC
%% or a CPU thread moving packets to a CPU core; however, we use the
%% term to refer exclusively to the software work done on the final
%% CPU.

While these observations are known in the context of high-speed packet
processing, we also benchmarked a simple application on our own test
machine to validate them. Consider
\Fig{nature-of-packet-processing-cpu-work}, where we show the
throughput (packets/second (a), bits/second (b)) and latency (c) of a
simple packet forwarder written in the XDP framework running on a
single CPU core. Our testbed setup is described in much more detail
later (\Sec{evaluation}), but we briefly note here that our
device-under-test is an Intel Ice Lake CPU configured to run at a
fixed frequency of 3.6 GHz and attached to a 100 Gbit/s NIC. At each
packet size below 1024 bytes, the CPU core is fully utilized (at 1024
bytes the bottleneck is the NIC bandwidth). The achieved
packets/second is stable across all packet sizes which are CPU bound
(see the {\em 2 RXQ} curve). With a processing latency of roughly 14
nanoseconds at all packet sizes (measured only for the XDP forwarding
program), back-to-back packet processing should ``ideally'' process
$(14 * 10^{-9})^{-1}$ $\approx$ 71 million packets/second. However, the achieved
best-case throughput ($\approx$ 14 million packets/second) is much
smaller---implying that significant CPU work is expended in presenting
the input packets to and extracting the output packets from the
forwarder and setting up the NIC to transmit and receive those
packets. This is not merely a feature of the framework we used (XDP);
the DPDK framework has similar dispatch
characteristics~\cite{xdp-conext18}.

Our key insight is that it is possible to replicate program
computation without replicating dispatch, enabling multi-core
performance scaling. This leads us to the next principle:
%% Hence, despite replication, it is more important for CPU cores to
%% divide up the dispatch of the packet stream than the program
%% computations. This leads us to the following principle:

\begin{figure}
  \vspace{-5mm}
  \centering
  \subfloat[Packets/second]{
    \begin{minipage}[t]{0.15\textwidth}
    \includegraphics[width=\textwidth]{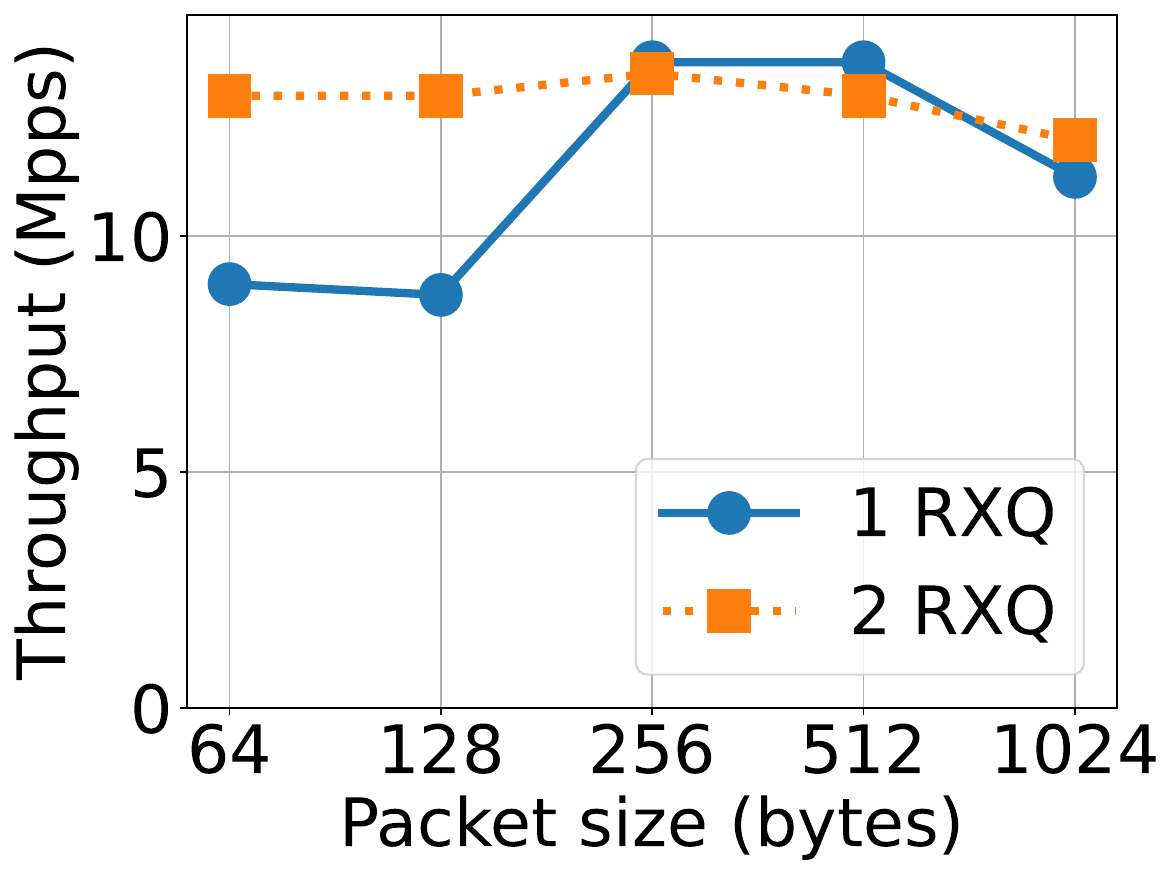}      
    \label{fig:xdp-fwd-rx-rate-pps}
    \vspace{-4mm}
    \end{minipage}} 
    % \hspace{0.01\textwidth}%
  \subfloat[Bits/second]{
    \begin{minipage}[t]{0.15\textwidth}
    \includegraphics[width=\textwidth]{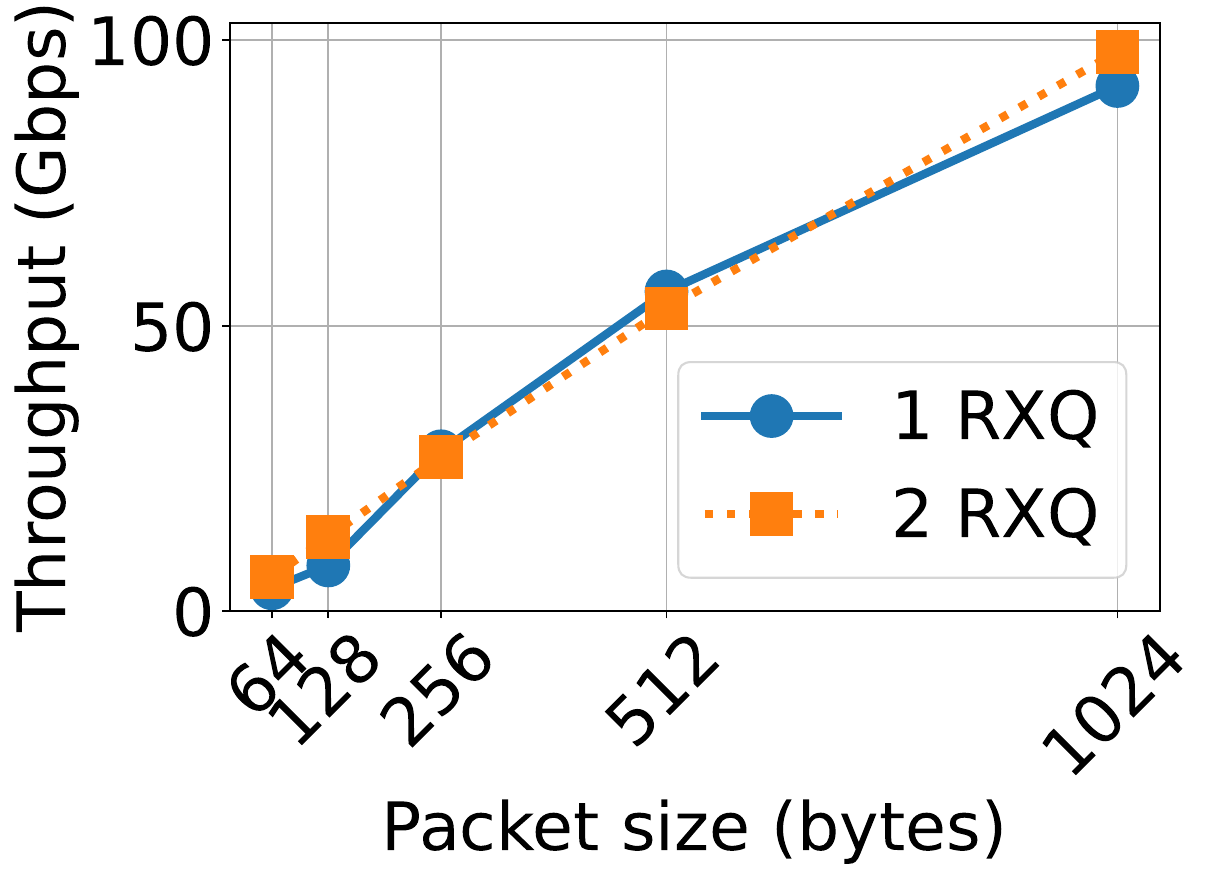}
    \label{fig:xdp-fwd-rx-rate-bps}
    \vspace{-4mm}
  \end{minipage}}
  \subfloat[Latency (ns)]{
    \begin{minipage}[t]{0.15\textwidth}
    \includegraphics[width=\textwidth]{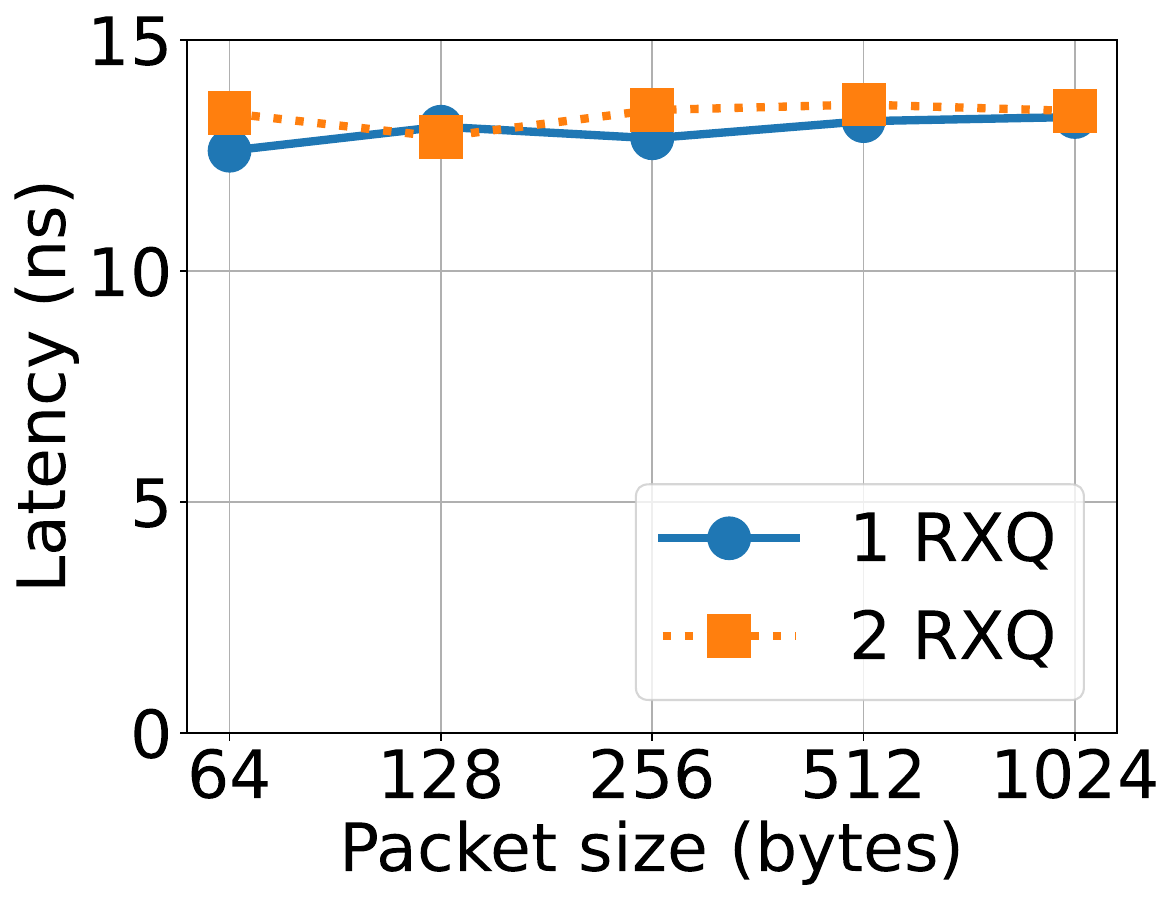}
    \label{fig:xdp-fwd-latency}
    \vspace{-4mm}
  \end{minipage}}
    %%\vspace{-2.5mm}
    \caption{The nature of CPU work in high-speed packet processing:
      Consider the throughput of a simple packet forwarding
      application (packets/second (a), bits/second (b)) running on a
      single CPU core clocked at 3.6 GHz, as the size of the incoming
      packets varies.  The average latency to execute the XDP program
      is also shown in nanoseconds (c). CPU usage is tied to the
      number of packets (not bits) processed per second. Further,
      significant time elapses in {\em dispatch:}
      CPU work to present the input packet to and retrieve the
      output packet from the program computation.}
    \label{fig:nature-of-packet-processing-cpu-work}
    \vspace{-5mm}
\end{figure}

\Para{Principle \#2 (State-Compute Replication).}  Piggybacking a {\em
  bounded recent packet history} on each packet sent to a core allows
us to use replication (\#1) while equalizing the external and internal
packets-per-second CPU work in the system.

Principle \#2 states that replication (principle \#1) is possible
without increasing the total internal packet rate or per-packet CPU
work done by the system.  Suppose it is possible to spray the
incoming packets across cores in a round-robin fashion. If there are
$k$ cores, each core receives every $k^{th}$ packet.  Then:

\begin{CompactEnumerate}
  \item It is unnecessary for each core to have the most
    up-to-the-packet private state at all times. For correctness, it
    is sufficient if a core has a ``fresh enough'' state to make a
    decision on the current packet that it is processing.
  \item With each new packet, suppose the core that receives it also
    sees (as metadata on that packet) all the $k-1$ packets from the
    last time it processed a packet, \ie\ {\em a recent packet
      history}. The core can simply ``catch up'' on the computations
    required to obtain the most up-to-the-packet value for its private
    state.
  \item If packets are sprayed round-robin across cores, the number of
    historic packets needed to ensure that the most updated state is
    available to process the current packet is equal to the
    number of cores. Further, just those packet bits required to
    update the shared state are necessary from each historic packet,
    allowing us to pack multiple packets' worth of history into a
    single packet received at a core.
\end{CompactEnumerate}

As a simple model, suppose a system has $k$ cores, and each core can
dispatch a single packet in $d$ cycles and runs a packet-processing
program that computes over a single packet in $c$ cycles. For each
piggybacked packet, the total processing time is $d + (k \times
c)$. When dispatch time dominates compute time, $d \gg c$. With $k$
cores, the total rate at which externally-arriving packets can be
processed is $k \times \frac{1}{d + (k \times c)} \approx k /
d$. Hence, it is possible to scale the packet-processing rate linearly
with the number of cores $k$. In \App{throughput-model}, we show that
a model like this indeed accurately predicts the empirical throughput
achieved with a given number of cores.

Intuitively, doing some extra ``lightweight'' program computation per
packet enables scaling the ``heavyweight'' dispatch computation with
more cores, while maintaining correctness.
%% this principle
%% trades off significant scaling in packet dispatch using more cores
%% with a little extra program computation, while maintaining
%% correctness.

\Para{Principle \#3 (Scaling limits).} Principle \#2 provides a linear
scale-up in the packets-per-second throughput with more cores, so long
as dispatch dominates the per-packet work.

%% The system's achievable packet rate will not scale linearly if
%% dispatch no longer dominates the per-packet work. 
The scaling benefits of principle \#2 taper off beyond a point. Dispatch can
be overtaken as the primary contributor to per-packet CPU work, for
example, when (i) the compute time $k \times c$ for each piggybacked
packet becomes sizable; (ii) the per-packet compute time $c$ itself
increases due to overheads in the system, \eg larger memory access
time when a core's copy of the state spills into a larger memory; or
(iii) other components such as the NIC or PCIe become the bottleneck
rather than the CPU. When this happens, the approximation in our
simple linear model ($d \gg c$) no longer holds, and the system's
packet rate no longer scales linearly with cores.

\subsection{Operationalizing \SCR}
\label{sec:operationalizing-scr}

Operationalizing the scaling principles discussed above
(\Sec{scr-principle}) conceptually requires two pieces.

\Para{A reliable packet history sequencer (\Sec{pipeline-sequencer}).}
We require an additional entity in the system, which we call a {\em
  sequencer,} to (i) steer packets across cores in round-robin
fashion, (ii) maintain the most recent packet history across all
packets arriving at the machine, and (iii) piggyback the 
history on each packet sent to the cores. After the packet is
processed by a CPU core, its piggybacked history can be stripped off
on the return path either at the core itself or at the sequencer.
%% The act of stripping off
%% the piggybacked history from the packet after it is processed by the
%% program can be implemented either at the CPU core or the
%% sequencer. 
\revadd{The size of the packet history depends only on the
  number of cores and metadata size (\Sec{scr-principle}) and
  is independent of the number of active flows.}

The NIC hardware or the top-of-the-rack switch are natural points to
introduce the sequencer functionality, since they observe all the
packets flowing into and out of the machine. Today's existing
fixed-function NICs do not implement the functionality necessary to
construct and piggyback a reliable packet history. However, we have
identified two possible instantiations that could, in the near future,
achieve this: (i) emerging NICs, \eg with programmable
pipelines~\cite{pensando, bluefield, intel-ipu}, implementing the full
functionality of the reliable sequencer; or (ii) a combination of a
NIC implementing round-robin packet steering~\cite{rss,
  intel-flow-director} and a programmable top-of-the-rack switch
pipeline~\cite{rmt-sigcomm13, tofino, trident} for maintaining and
piggybacking the packet history. We believe that either of these
instantiations may be realistic and applicable given the context: for
example, high-speed programmable NICs are already common in some large
production systems~\cite{firestone2018azure}, as are programmable
switch pipelines~\cite{sailfish-sigcomm21}.  We will show two possible
hardware designs in \Sec{pipeline-sequencer}. Hereafter, for brevity,
we refer to both of these designs as simply sequencers.

\Para{An SCR-aware packet-processing program.}  The packet-processing
program must be developed to replicate the program state and keep
private copies per core. Further, the program must process the packet
history first before computing over the current packet. We discuss a
running example that demonstrates how to transform a single-threaded
program to its \SCR-aware variant in \App{scr-programming}. We believe
that these program transformations can be automated in the future.

\begin{figure}
  \centering
  \subfloat[The sequencer stores relevant fields from the packet
    history, and piggybacks the history on packets sprayed
    round-robin across cores.]{
    \begin{minipage}[t]{0.45\textwidth}
      \begin{center}
        \includegraphics[width=0.80\textwidth]{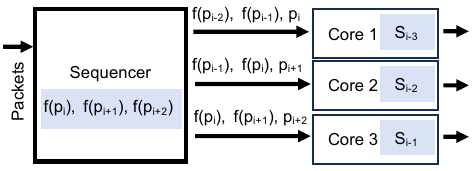}
    \label{fig:pipeline}
    \vspace{-3mm}
      \end{center}
    \end{minipage}}\\
  %%\hspace{0.02\textwidth}%
  \subfloat[Each core fast-forwards its private state and then handles
    its packet. ]{
    \begin{minipage}[t]{0.45\textwidth}
      \begin{center}
    \includegraphics[width=0.80\textwidth]{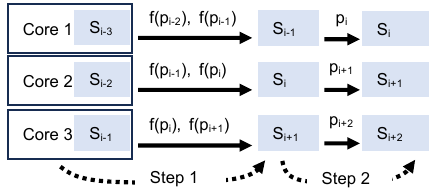}
    \label{fig:state-update}
    \vspace{-3mm}
    \end{center}
    \end{minipage}}
    \vspace{-2.5mm}
    \caption{An example illustrating the scaling principles. $p_i$ is
      the $i^{th}$ packet received by the sequencer, $f(p_j)$ are
      relevant fields from $p_j$, and $S_i$ is the state after
      processing packets $p_1, ..., p_i$ in order.}
    \label{fig:overview}
    \vspace{-6mm}
\end{figure}

\nop{
\begin{figure*}
  \centering
  \subfloat[The sequencer stores relevant fields from the packet
    history, and piggybacks the history on packets sprayed
    round-robin across cores.]{
    \begin{minipage}[t]{0.33\textwidth}
    \includegraphics[width=\textwidth]{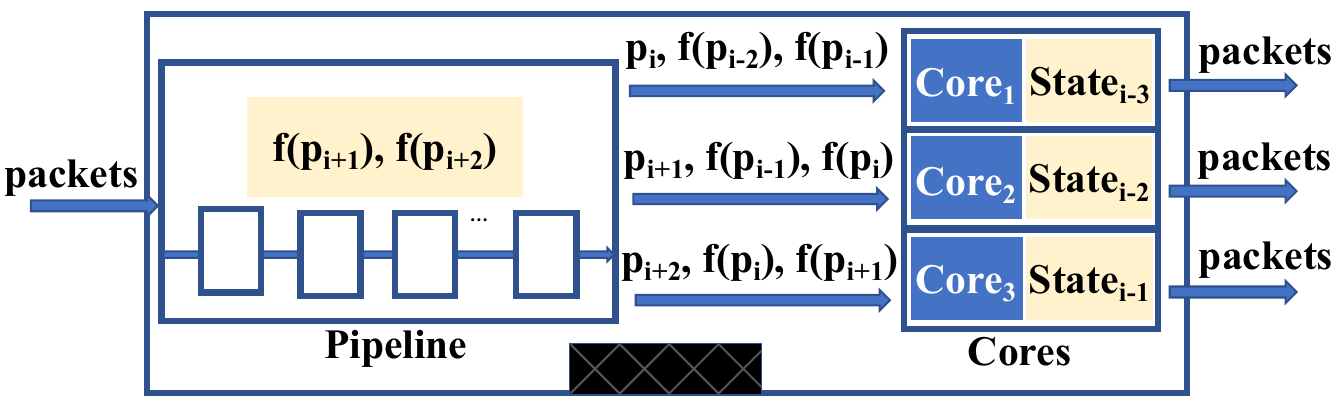}
    \label{fig:pipeline}
    \vspace{-3mm}
    \end{minipage}}
  \hspace{0.02\textwidth}%
  \subfloat[Core fast forwards its private state and then handles
    the sprayed packet.]{
    \begin{minipage}[t]{0.33\textwidth}
    \includegraphics[width=\textwidth]{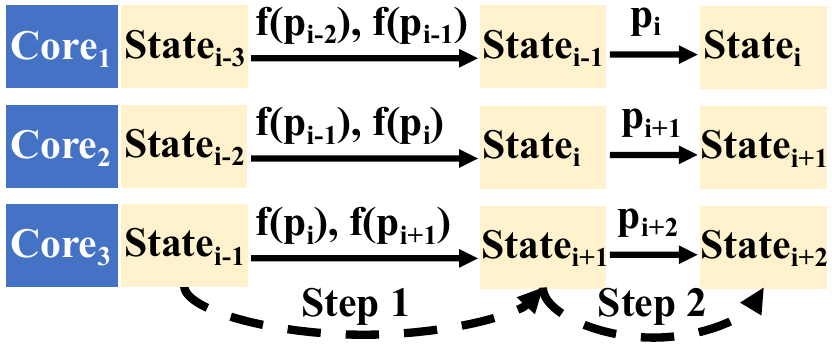}
    \label{fig:state-update}
    \vspace{-3mm}
    \end{minipage}}
    \vspace{-2.5mm}
    \caption{An example illustrating the scaling principles. $p_i$ is
      the $i^{th}$ packet received by the sequencer, $f(p_j)$ are
      relevant fields from $p_j$, and $State_i$ is the state after
      processing packets $p_1, ..., p_i$ in order.}
    \label{fig:overview}
    \vspace{-6mm}
\end{figure*}
}

\Para{An example showing scaling principles in action.} Consider
\Fig{overview}, where a sequencer and three cores are used to run a
packet-processing program. As shown in \Fig{pipeline}, the
sequencer sprays packets (\ie $p_i, p_{i+1}, \ldots$) in a round-robin
fashion across $k=3$ cores (\ie $core_1, core_2, core_3$). Further,
the sequencer stores the recent packet history consisting of the packet
fields from the last $k$ packets which are relevant to evolving the
flow state.
We denote the relevant part of a packet $p_i$ by
$f(p_i)$. For example, in a TCP connection tracking program, this includes
the TCP 4-tuple, the TCP flags, and sequence and ACK numbers.
%% Prior work on circular buffers implemented on
%% switches~\cite{packet-histories-nsdi21} has shown that it is feasible
%% to update packet histories at line rate on pipelines. 
Note that this
packet history is updated only by the sequencer and is never written to
by the cores.  In the example in \Fig{pipeline}, the packet history
supplied to $core_1$ processing packet $p_i$ is $f(p_{i-2}),
f(p_{i-1})$. As shown in \Fig{state-update}, each core updates its
local private state, first fast-forwarding the state by running the
program through the packet history $f(p_{i-2}), f(p_{i-1})$, and then
processing the packet $p_i$ sprayed to it.
%% As shown in Fig. \ref{fig:overview}(b), before
%% $core_1$ processes $p_i$, it will first process $f(p_{i-1}),
%% f(p_{i-2})$.  Since each core misses at most $n-1$ packets in
%% round-robin spraying, it can ``fast-forward'' its application state to
%% the most updated values using a packet history with a known, bounded
%% size of at most $n-1$.  This solution will always produce {\em
%%   correct} state and packet-level outcomes on each core.

If the packet-processing program is deterministic
(\Sec{scr-principle}), an \SCR-aware program is guaranteed to produce
the correct output state and packet if every CPU core is guaranteed to
receive the packets sent to it by the sequencer. We discuss how to
handle packet loss and nondeterminism in \Sec{packet-loss-nondeterminism}.

\subsection{Packet History Sequencer}
\label{sec:pipeline-sequencer}

\begin{figure*}
  \vspace{-5mm}
  \centering
  \subfloat[Packet format]{
    \begin{minipage}[t]{0.31\textwidth}
    \includegraphics[width=\textwidth]{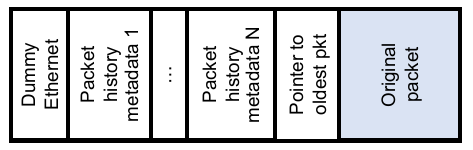}
    \label{fig:packet-format}
    \vspace{-4mm}
    \end{minipage}}
  \subfloat[Tofino sequencer]{
    \begin{minipage}[t]{0.31\textwidth}
    \includegraphics[width=\textwidth]{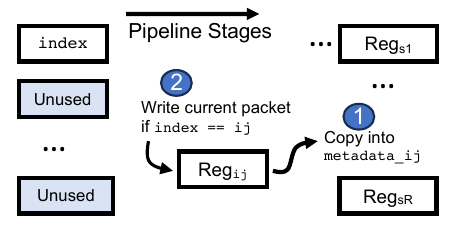}
    \label{fig:tofino-data-structure}
    \vspace{-4mm}
  \end{minipage}}
  \subfloat[RTL sequencer]{
    \begin{minipage}[t]{0.31\textwidth}
    \includegraphics[width=\textwidth]{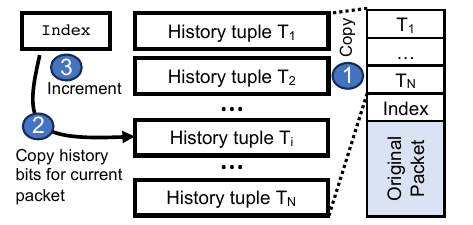}
    \label{fig:verilog-data-structure}
    \vspace{-4mm}
  \end{minipage}}
  %%\vspace{-2.5mm}
  \caption{Hardware data structures.
    (a) Packets modified to propagate history from the sequencer to
    CPU cores. The sequencer prefixes the packet history to the
    original packet, which allows for a simpler implementation in
    hardware (\Sec{pipeline-sequencer}) and simpler transformations to
    make a packet-processing program \SCR-aware
    (\App{scr-programming}). In instantiations where the sequencer is
    partly implemented on a top-of-the-rack switch
    (\Sec{operationalizing-scr}), we further prefix a dummy Ethernet
    header to ensure that the NIC can process the packet correctly.
    (b) The data structure used to maintain and propagate packet
    history on the Tofino programmable switch pipeline
    (\Sec{data-structure-packet-history}). Inset shows the specific
    actions performed on each Tofino register.
    (c) The data structure used to maintain and propagate packet
    history on our Verilog module integrated into NetFPGA-PLUS
    (\Sec{data-structure-packet-history}).
    }
    \label{fig:hardware-data-structures}
    \vspace{-6mm}
\end{figure*}

\nop{
\begin{figure}
  %%\vspace{-5mm}
  \centering
  \includegraphics[width=0.45\textwidth]{figs/packet-format.pdf}
  \vspace{-2mm}
  \caption{Packets modified to propagate history from the sequencer to
    CPU cores. The sequencer prefixes the packet history to the
    original packet, which allows for a simpler implementation in
    hardware (\Sec{pipeline-sequencer}) and simpler transformations to
    make a packet-processing program \SCR-aware
    (\App{scr-programming}). In instantiations where the sequencer is
    partly implemented on a top-of-the-rack switch
    (\Sec{operationalizing-scr}), we further prefix an additional
    Ethernet header to ensure that the NIC can process the packet
    correctly.}
  \vspace{-6mm}
  \label{fig:packet-format}
\end{figure}
}

The primary goal of the sequencer is to maintain and propagate recent
packet history to CPU cores to help replicate the computation with the
correct program state (\Sec{operationalizing-scr}). We assume the NIC
is already capable of spraying packets across CPU cores~\cite{rss,
  intel-flow-director}, and hence do not discuss that functionality
further. We describe the rest of the sequencer's functions in
terms of the following: (i) designing a packet format that modifies
existing packets to piggyback history from the sequencer to the CPU
cores; (ii) designing a hardware data structure that maintains a
recent bounded packet history at the sequencer, and enables reading
out the history into metadata on the packet. The packet fields that
are maintained in the sequencer history depend on the specific fields
used by the packet-processing application. The number of historic
packets that must be tracked depends on the degree of parallelism that
is sought, \eg the number of available CPU cores over which scaling is
implemented.

We have implemented sequencing hardware data structures on two
platforms, the Tofino programmable switch pipeline~\cite{tofino} and a
Verilog module that we integrated into the NetFPGA-PLUS
project~\cite{netfpga-plus}.

\subsubsection{Packet format}
\label{sec:packet-format}

The key question answered in this subsection is: given a packet, what
is the best place to put the packet history on it?  While this may
appear ``just an engineering detail'', designing the right packet
format has important implications to the design of hardware data
structures on the sequencer and the \SCR-aware program.

As shown in \Fig{packet-format}, we choose to place the packet history
close to the beginning of the packet, before the entirety of the
original packet. Relative to placing the packet history between
headers of the original packet, this placement simplifies the hardware
logic that writes the history into the packet, as the write must
always occur at a fixed address (0) in the packet buffer. Further, for
reasons explained in \Sec{data-structure-packet-history}, we include a
pointer to the metadata of the packet that arrived the earliest among
the ones in the piggybacked history. The earliest packet does not
always correspond to the first piece of metadata when reading the
bytes of the packet in order.

Keeping all the bytes of the original packet together in one place
also simplifies developing an \SCR-aware packet-processing
program. The packet parsing logic of the original program can remain
unmodified if the program starts parsing from the location in the
modified packet buffer which contains all the bytes of the original
packet in order.

Finally, we also prefix an additional Ethernet header to the packet in
instantiations of the sequencer which run outside of the NIC, \ie a
top-of-the-rack switch. Adding this header helps the NIC process the
packet correctly: without it, the packet appears to have an
ill-formatted Ethernet MAC header at the NIC. Our setup also uses this
Ethernet header to force RSS on the NIC~\cite{rss} to spray packets
across CPU cores (our testbed NIC (\Sec{eval-experiment-setup})
supports hashing on L2 headers). This additional Ethernet header is
not needed in a sequencer instantiation running on the NIC.

\subsubsection{Hardware data structures for packet history}
\label{sec:data-structure-packet-history}

We show how to design data structures to maintain and update a recent
packet history on two high-speed platforms, a Tofino programmable
switch pipeline~\cite{tofino} and a Verilog module integrated into the
NetFPGA-PLUS platform~\cite{netfpga-plus}. These designs are specific
to the platform where they are implemented, and hence we describe them
independently.

A key unifying principle between the two designs is that although the
items in the maintained packet history change after each packet, we
only want to update a small part of the data structure for each
packet. Conceptually, a ring buffer data structure is appropriate to
maintain such histories. Hence, in both designs, we use an {\em index
  pointer} to refer to the current data item that must be updated,
which corresponds to the head pointer of the abstract ring buffer
where data is written.

\nop{
\begin{figure}
  %%\vspace{-5mm}
  \centering
  \includegraphics[width=0.45\textwidth]{figs/tofino-data-structure.pdf}
  \vspace{-2mm}
  \caption{The data structure used to maintain and propagate packet
    history on the Tofino programmable switch pipeline
    (\Sec{data-structure-packet-history}). Inset shows the specific
    actions performed on each Tofino register.}
  \vspace{-6mm}
  \label{fig:tofino-data-structure}
\end{figure}
}

\Para{Tofino.} We use Tofino
registers~\cite{tofino-native-architecture}, which are stateful
memories to hold data on the switch, to record the bits of each
historic packet relevant to the computation in the packet-processing
program. Suppose the pipeline has $s$ match-action table stages, $R$
registers per stage, and $b$ bits per register. For simplicity in this
description, we assume there is exactly one packet field of size $b$
bits used in the computation in the packet-processing program.  Our
data structure can maintain a maximum of $(s-1) \times R \times b$
bits of recent packet history, \ie history for $(s-1) \times R$
packets, as shown in \Fig{tofino-data-structure}. We have successfully
compiled the design to the Tofino ASIC.

First, we use a single register in the first stage to store the index
pointer. The pointer refers to the specific register in the subsequent
stages that must be updated with a header field from the current
packet. The index pointer is incremented by 1 for each packet, and
wraps back to 0 when it reaches the maximum number of fields
required in the history. The pointer is also carried on a metadata
field on the packet through the remaining pipeline stages.

Next, register ALUs in subsequent stages are programmed to read out
the values stored in them into pre-designated metadata fields on the
packet. If the index pointer points to this register, an additional
action occurs: rewrite the stored contents of the register by the
pre-designated history fields from the current packet.

Finally, all the metadata fields, consisting of the packet history
fields and the index pointer, are deparsed and serialized into the
packet in the format shown in \Fig{packet-format}. We also add an
additional Ethernet header to ensure that the server NIC can receive
the packets correctly (\Sec{packet-format}).

Recent work explored the design of ring buffers to store packet
histories in the context of debugging~\cite{packet-histories-nsdi21},
reading out the histories from the control plane when a debugging
action is triggered. A key difference in our design is that reading
out histories into the packet is a data plane operation, occurring on
every packet.

\nop{
\begin{figure}
  %%\vspace{-5mm}
  \centering
  \includegraphics[width=0.45\textwidth]{figs/verilog-data-structure.pdf}
  \vspace{-2mm}
  \caption{The data structure used to maintain and propagate packet
    history on our Verilog module integrated into NetFPGA-PLUS
    (\Sec{data-structure-packet-history}).}
  \vspace{-6mm}
  \label{fig:verilog-data-structure}
\end{figure}
}

\Para{NetFPGA.} To show the possibility of developing high-speed
fixed-function hardware for sequencing, we also present a sequencer
design developed in Verilog in \Fig{verilog-data-structure}.

Suppose we wish to maintain a history of $N$ packets, each packet
contributing $b$ bits of information. A simple design, for small
values of $N$ and $b$ (we used $N = 16$ and $b = 112$), uses a memory
which has $N$ rows, each containing a tuple of $b$ bits. We also
maintain a register containing the index pointer ($p$ bits),
initialized to zero. At the beginning the memory is initialized with
all zeroes. When a packet arrives, it is parsed to extract the bits
relevant to the packet history. Then the entire memory is read and put
in front of the packet (moving the packet contents by a fixed size
known apriori, $N \times b + p$ bits). The information relevant to the
packet history from the current packet is put into the memory row
pointed to by the index pointer, and the index pointer is incremented
(modulo the memory size). We have integrated this design into the
NetFPGA-PLUS platform.

\subsection{Packet Loss and Nondeterminism}
\label{sec:packet-loss-nondeterminism}

\Para{Handling Packet Loss.}  {\revaddlargebegin Packets can be lost
  either prior to the sequencer, after the sequencer but prior to
  processing at a CPU core, or after processing at a core. Among
  these, we only care about the second kind of packet loss, since this
  is the only one that is problematic specifically for \SCR,
  introducing the possibility that flow states on the CPU cores might
  become inconsistent with each other.

  We expect that \SCR will be deployed in scenarios where the event of
  packet loss between the sequencer and CPU cores is rare. First, we
  do not anticipate any packet loss in an instantiation where the
  sequencer is running entirely on a NIC, since the host interconnect
  between the NIC and CPU cores uses credit-based flow control and is
  lossless by design~\cite{host-congestion-hotnets22,
    pciebench-sigcomm18}.  In an instantiation where the sequencer is
  running on a top-of-the-rack switch, it is possible to run
  link-level flow control mechanisms like PFC~\cite{pfc-ieee-8021qbb}
  (as some large production networks do~\cite{rdma-at-scale}) to
  prevent packet loss between the switch and server
  cores. We discuss below how \SCR can handle rare
  packet drop and corruption events while maintaining consistency
  among the states on CPU cores.

  How should a CPU core that has lost a packet arriving from the
  sequencer synchronize itself to the correct flow state?  There are
  two design options: the core can either explicitly read the full
  flow state from a more up-to-date core, or it can read the packet
  history from either the sequencer or a log written by a more
  up-to-date core, and then use the history to catch up its private
  state (akin to \Fig{overview}). Since we operate in a regime where
  packet losses are rare, but the full set of flow states is large, we
  prefer to synchronize the packet history rather than the
  state. Further, to simplify the overall design, we avoid explicit
  coordination between the cores and the sequencer, synchronizing
  the history among the cores only.

  Our objective is {\em atomicity:} any packet is either processed by
  all the cores or none of the cores. If a packet is sent by the
  sequencer to {\em any} core (in original or as part of the packet
  history), it should be processed in the correct order by {\em all}
  the cores.  To achieve this, we (i) have the sequencer attach an
  incrementing sequence number to each packet released by it; (ii) use
  a per-core, lockless, single-writer multiple-reader log, into which
  each core writes the history contained in each packet it receives
  (including the relevant data for the original packet); and (iii)
  introduce an algorithm to catch up the flow state on each core upon
  detection of loss.

  The algorithm proceeds as follows (more information is available in
  \Alg{loss_recovery} in \App{loss-recovery-algorithm}). Each CPU core
  $c$ maintains a per-core log with one entry for each sequence number
  $i$. In a system with $N$ cores, the history metadata of the packet
  with sequence $i$ (say $history[i]$) will appear in packets with
  sequence numbers $i$ through $i + N - 1$; conversely, a packet with
  sequence number $j$ contains $history[minseq], \cdots, history[j]$
  where $minseq \triangleq max(1, j-N+1)$.

  For core $c$ and sequence number $i$, $log[c][i] :=$
  \begin{gather*}
  \begin{cases}
    history[i] & \text{if history for sequence $i$ was received at $c$} \\
    NOT\_INIT & \text{if the highest sequence received at $c$ is $j < i$} \\
    LOST & \text{if $c$ has received sequence $j > i$, }\\
    & \text{but sequence $i$ was not received at $c$} \\
  \end{cases}
  \end{gather*}

Initially, $\forall c, i: log[c][i] \triangleq NOT\_INIT$, to denote
that the log entry for every sequence $i$ is uninitialized at every core
$c$. When a (fixed) core $c$ receives a packet with sequence number
$j$, it first detects packet loss by comparing the max sequence number
it has seen so far (say $max[c]$) with the earliest sequence number in
the new packet it receives ($minseq$), assuming no reordering between
the sequencer and the core. Then, $c$ processes every sequence number
$k \in \{k\ |\ max[c] < k \leq j\}$ in order of increasing $k$, as
follows:

\begin{CompactEnumerate}

\item if $k < minseq$, \ie sequence $k$ was lost between the sequencer
  and core $c$, the core updates $log[c][k] \leftarrow LOST$.  For
  such packets, core $c$ will read from the logs of other cores $c'
  \neq c$ in a loop, until $c$ discovers either that (i) $history[k]$
  is written in $log[c'][k]$, in which case $c$ catches up its private
  state by reading this history; or (ii) $log[c'][k] = LOST$ on all
  cores $c' \neq c$, concluding that sequence $k$ was never originally
  received on any core, and does not need to be recovered for
  atomicity;

\item if $minseq \leq k \leq j$, \ie sequence $k$ is successfully
  received at core $c$ (as part of the current packet), core $c$
  updates $log[c][k] \leftarrow history[k]$ available in the packet,
  and then proceeds with regular processing as in
  \Sec{operationalizing-scr}.

\end{CompactEnumerate}

In \App{loss-recovery-algorithm}, we formally prove that, under some
mild assumptions, this algorithm always terminates in a state that is
eventually consistent across all CPU cores. Despite cores possibly
waiting on one another, there will be no deadlocks.
\forcameraready{that every core will eventually receive a new packet
  from the sequencer.}

%% For $minseq \leq k \leq j$, core $c$ updates $log[c][k] \leftarrow
%% history[k]$ available in the packet, and if $max_c < j-N$, the core
%% also updates $log[c][k] \leftarrow LOST$ for $\max_c < k < j-N+1$.

%% Starting from the smallest sequence number $k$ that is $LOST$ at $c$,
%% core $c$ will read from the logs of other cores $c' \neq c$ in a loop,
%% until $c$ discovers either that (i) $history[k]$ is written in
%% $log[c'][k]$, in which case $c$ catches up its private state by
%% reading this history; or (ii) $log[c'][k] = LOST$ on all cores $c'
%% \neq c$, concluding that the history for sequence $k$ was never
%% originally received on any core, and does not need to be processed or
%% recovered for consistency. In \App{loss-recovery-algorithm}, we
%% formally prove that this algorithm always terminates in a
%% state that is eventually consistent across all CPU cores under the
%% mild assumption that every core will eventually receive a new packet.
}

\Para{Handling non-deterministic programs.} {\revaddlargebegin
  Programs replicating computation across cores may diverge from each
  other due to two possible concerns. The first is the use of
  timestamps in computations (\eg to implement a token bucket rate
  limiter). We avoid the use of timestamps measured
  locally at CPU cores, in favor of measuring and attaching timestamps
  to the packet history at the sequencer, and having the
  packet-processing program use the attached timestamp instead.  Modern NICs and
  programmable switches support high-resolution hardware
  timestamping over packets~\cite{tofino-native-architecture,
    nvidia-connectx-6}. The second concern is the use of random
  numbers in computations. We make program execution deterministic by
  fixing the seed of the pseudorandom number generator used across
  cores. 
}

\section{Evaluation}
\label{sec:evaluation}

\begin{center}
  \vspace{-4mm}
  \begin{table*}
  \resizebox{\textwidth}{!}{    
  \begin{tabular}{|l|c|c|c|c|c|c|c|}
    \hline
    {\bf Program} & \multicolumn{2}{|c|}{{\bf State}} & {\bf Metadata size} & {\bf RSS hash} & {\bf Packet traces} & {\bf Atomic HW} & {\bf Lines of code}\\
    & {\bf Key} & {\bf Value} & {\bf (bytes/packet)} & {\bf fields} & {\bf evaluated} & {\bf vs. Locks} & {\bf (shard/RSS)} \\
    \hline
    %% DDoS mitigator & SRC IP & count & 4 & SRC IP & CAIDA, UnivDC & Atomic HW & 168 \\
    %% Heavy hitter monitor & 5-tuple & flow size & 17 & 5-tuple & CAIDA, UnivDC & Atomic HW & 141 \\
    %% TCP connection state tracking & 5-tuple & conn state, ts, seq, ip/port rev & 30 & 5-tuple & hyperscalar data center trace & Locks &  1029 \\
    %% Token bucket policer & 5-tuple & last packet ts, \# tokens & 17 & 5-tuple & CAIDA, UnivDC & Locks & 169 \\
    %% Port-knocking firewall & SRC IP & knocking status (eg. Open) & 4 & SRC IP & CAIDA, UnivDC & Locks & 123 \\   
    DDoS mitigator & source IP & count & 4 & src \& dst IP & CAIDA, Univ DC & Atomic HW & 168 \\
    Heavy hitter monitor & 5-tuple & flow size & 18 & 5-tuple & CAIDA, Univ DC & Atomic HW & 141 \\
    TCP connection state tracking & 5-tuple & TCP state, timestamp, seq \# & 30 & 5-tuple & Hyperscalar DC & Locks & 1029 \\
    Token bucket policer & 5-tuple & last packet timestamp, \# tokens & 18 & 5-tuple & CAIDA, UnivDC & Locks & 169 \\
    Port-knocking firewall & source IP & knocking state (\eg\ {\ct OPEN}) & 8 & src \& dst IP & CAIDA, UnivDC & Locks & 123 \\   
    \hline
  \end{tabular}
  }
  \caption{The packet-processing programs we evaluated.}
  \label{tab:application-characteristics}
  \end{table*}
  \vspace{-4mm}
\end{center}

We seek to answer two main questions through the experiment setup
described in \Sec{eval-experiment-setup}.

\noindent (1) Does state-compute replication provide better multi-core
scaling than existing techniques (\Sec{eval-throughput-scaling})?

\noindent (2) How practical is sequencer hardware
(\Sec{eval-hardware-resource-use})?

\subsection{Experiment Setup}
\label{sec:eval-experiment-setup}

\Para{Machines and configurations.} Our experiment setup consists of
two server machines connected back-to-back over a 100 Gbit/s
Nvidia/Mellanox ConnectX-5 NIC on each machine. Our servers run Intel
Ice Lake processors (Xeon Gold 6334) with 16 physical cores (32
hyperthreads) and 256 GB DDR4 physical memory spread over two NUMA
nodes. The system bus is PCIe 4.0 16x. We run Ubuntu 22 with Linux
kernel v6.5. One machine serves as a packet replayer/generator,
running a DPDK burst-replay program which can transmit packets from a
provided traffic trace. We have tested that the traffic generator can
replay large traces (1 million packets) at speeds of $\sim$ 120
million packets/second (Mpps), for sufficiently small packets (so that
the NIC bandwidth isn't saturated first). The traffic generator can be
directed to transmit packets at a fixed transmission (TX) rate and
measure the corresponding received (RX) packet rate. Our second server
is the Device Under Test (DUT), which runs on identical hardware and
operating system as the first server. We implement standard
configurations to benchmark high-speed packet
processing~\cite{xdp-conext18}: hyperthreading is disabled; the
processor C-states, DVFS, and TurboBoost are disabled; dynamic IRQ
balancing is disabled; and the clock frequency is set to a fixed 3.6
GHz. We enable PCIe descriptor compression and use 256 \nop{in-flight}
PCIe descriptors. Receive-side scaling (RSS~\cite{rss}) is configured
according to the baselines/programs, see
\Tab{application-characteristics}. We use a single receive queue (RXQ)
per core unless specified otherwise.

\Para{The definition of throughput.} We use the standard {\em maximum
  loss-free forwarding rate} (MLFFR~\cite{rfc2544}) methodology to
benchmark packet-processing throughput. Our threshold for packet loss
is in fact larger than zero (we count $< 4\%$ loss as ``loss-free''),
since, at high speeds we have observed that the software typically
always incurs a small amount of bursty packet loss. We use binary
search to expedite the search for the MLFFR, stopping the search when
the bounds of the search interval are separated by less than 0.4
Mpps. \nop{We ensure that the computations are bottlenecked by CPU in
  all of our throughput measurement experiments.} Experimentally, we
observe that MLFFR is a stable throughput metric: we get highly
repeatable results across multiple runs. We only report throughput
from a single run of the MLFFR binary search.

\Para{Traces.} We are interested in understanding whether \SCR
provides better multi-core scaling than existing techniques on
realistic traffic workloads. We have set up and used three traces for
throughput comparison: a university data center
trace~\cite{microsoft-network-sigcomm10}, a wide-area Internet
backbone trace from CAIDA~\cite{caida}, and a synthetic trace with
flows whose sizes and inter-arrivals were sampled from a hyperscalar's
data center flow characteristics~\cite{dctcp-sigcomm10}. These traces
are highly dynamic, with flow states being created and destroyed
throughout---an aspect that we believe is crucial to handle in real
deployment environments (\ie the programs are not simply processing a
stable set of active flows). Further, we ensure that all TCP flows
that begin in the trace also end, by setting TCP SYN and FIN flags for
the first and last packets (resp.) of each flow in the trace. This
allows the trace to be replayed multiple times with the correct
program semantics. The flow size distributions of the traces are shown
in \Fig{flow-size-distributions}.

\begin{figure}
  \vspace{-5mm}
  \centering
  \subfloat[University DC]{
    \begin{minipage}[t]{0.16\textwidth}
    \includegraphics[width=\textwidth]{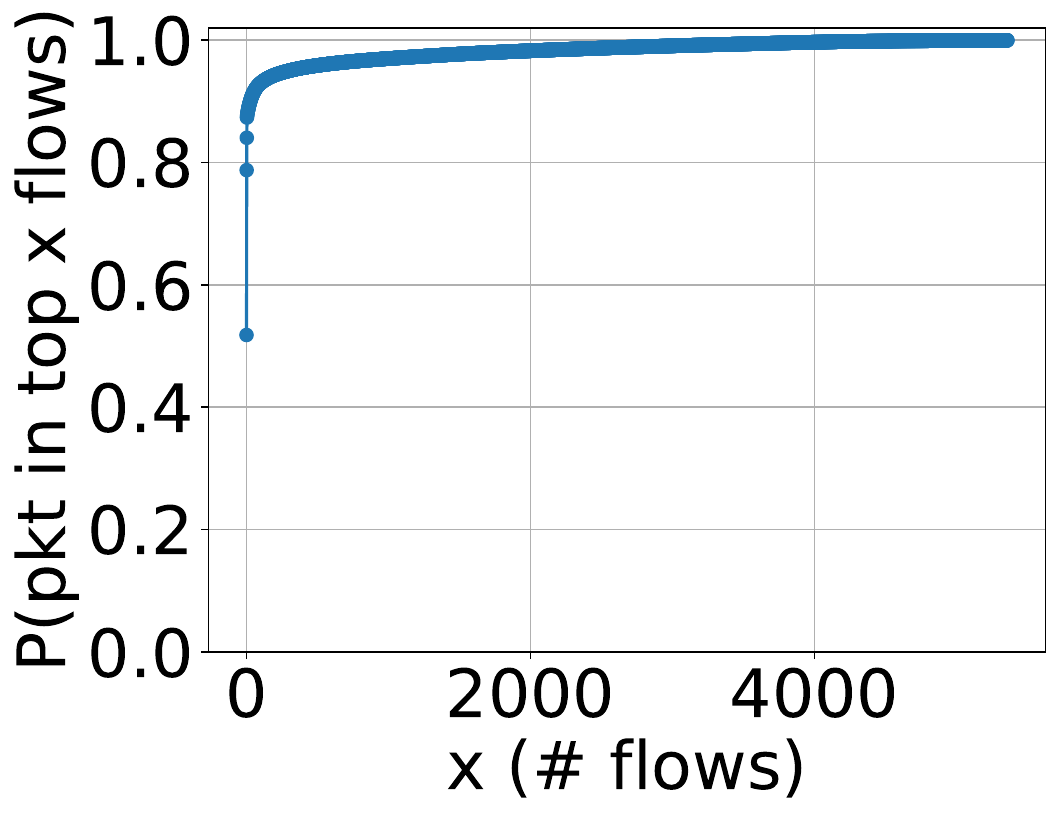}
    \label{fig:cdf-dc-trace}
    %%\vspace{-4mm}
    \end{minipage}} 
    % \hspace{0.01\textwidth}%
  \subfloat[Internet backbone]{
    \begin{minipage}[t]{0.16\textwidth}
    \includegraphics[width=\textwidth]{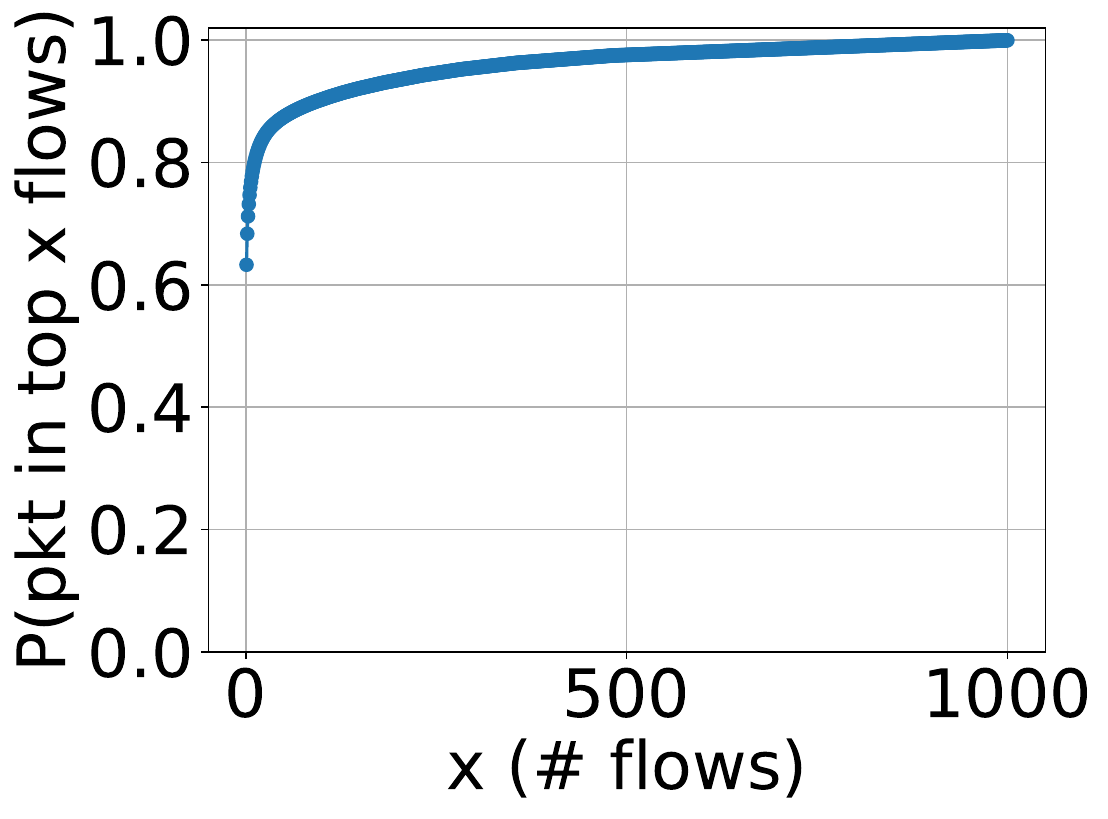}
    \label{fig:cdf-caida-trace}
    %%\vspace{-4mm}
  \end{minipage}}
  \subfloat[Hyperscalar DC]{
    \begin{minipage}[t]{0.16\textwidth}
    \includegraphics[width=\textwidth]{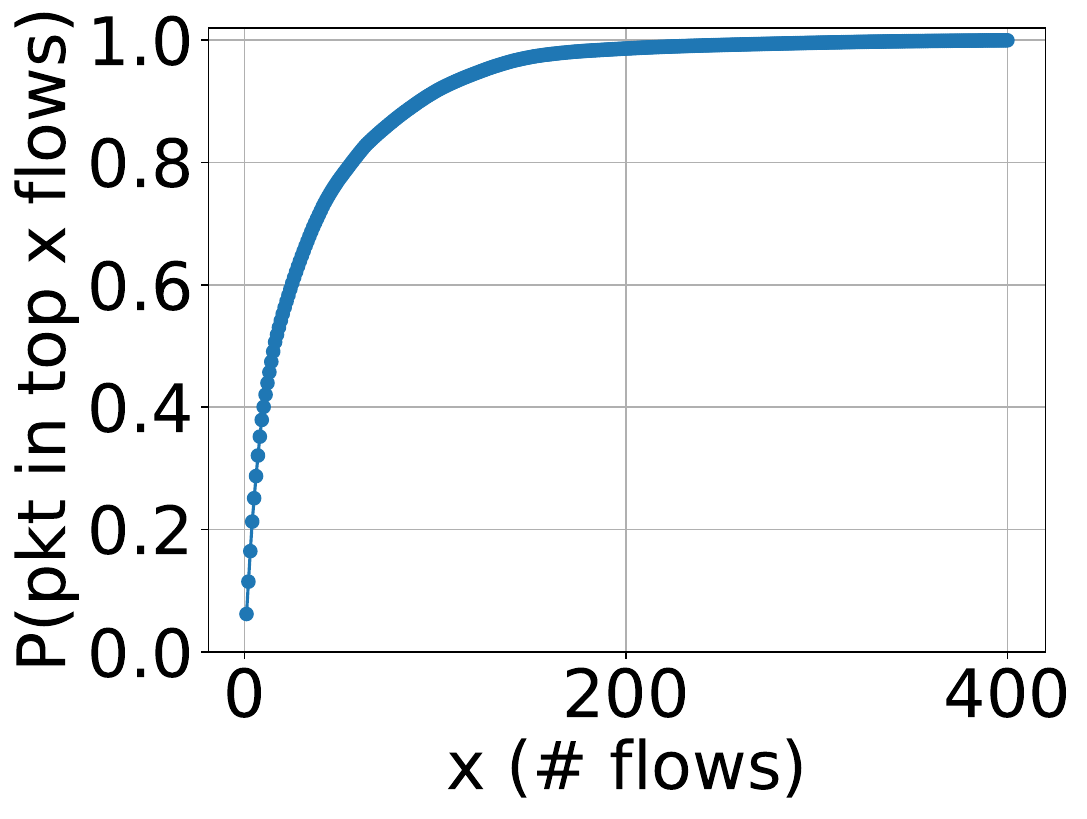}
    \label{fig:cdf-dctcp-trace}
    %%\vspace{-4mm}
  \end{minipage}}
    %%\vspace{-2.5mm}
    \caption{Flow size distributions of the packet traces we used. We
      used real packet traces captured at (a) university data
      center~\cite{microsoft-network-sigcomm10} and (b) wide-area
      Internet backbone by CAIDA~\cite{caida}. We also synthesized (c)
      a packet trace with real TCP flows whose sizes are drawn from
      Microsoft's data center flow size
      distribution~\cite{dctcp-sigcomm10}. }
    \label{fig:flow-size-distributions}
    \vspace{-5mm}
\end{figure}

The eBPF framework limits our implementations in terms of the number
of concurrent flows that our stateful data structures can
include. This is not a limitation of the techniques, but an
artifact of the current packet-processing framework we
use (eBPF/XDP). To account for this limitation, specifically for the
CAIDA trace, we have sampled flows from the trace's empirical flow
size distribution to faithfully reflect the underlying distribution,
without over-running the limit on the number of concurrent flows that
any of our baseline programs may hold across the lifetime of the
experiment.

\Para{Baselines.} We compare state-compute replication against (i)
state sharing, an approach that uses hardware atomic instructions
(when the stateful update is simple enough) or eBPF
spinlocks~\cite{ebpf-spinlocks} (when it is not) to share state across
CPU cores; (ii) state sharding using classic RSS; and (iii) sharding
using RSS++~\cite{rss++-conext19, automatic-parallelization-nsdi24},
the state-of-the-art flow sharding technique to balance CPU
load. RSS++ solves an optimization problem that takes as input the
incoming load imposed by flow shards, and migrates shards to minimize
a linear combination of load imbalance across CPU cores and the number
of cross-core shard transfers needed.
%
%%  It is
%% expected to scale better than RSS by rebalancing flows across cores by
%% measuring and responding to high per-core workload.
Both \SCR and state sharing spray packets evenly across CPU cores. The
packets sent to each core for the sharding techniques depends on the
configuration of RSS, which varies across the programs we
evaluated (see below and \Tab{application-characteristics}). Running
RSS++ over eBPF/XDP requires patching the NIC
driver~\cite{rss++-kernel-patch}. \forcameraready{to expose the RX
  hash on packets to XDP programs.} Unless specified otherwise, we run \SCR
without loss recovery
(\Sec{packet-loss-nondeterminism}); \forcameraready{as we believe this
  is the most representative scenario in which \SCR will be deployed.}
we evaluate loss recovery separately (\Sec{eval-throughput-scaling}).

\Para{Programs.} We tested five packet-processing programs
developed in eBPF/XDP, including (i) a heavy hitter monitor, (ii) DDoS
mitigator, (iii) TCP connection state tracker, (iv) port-knocking
firewall, and (v) a token bucket policer.
\revadd{\Tab{application-characteristics} summarizes these programs. Each
program maintains state across packets in the form of a key-value
dictionary, whose size and contents are listed in the table.} We
developed a cuckoo hash table to implement the functionality of this
dictionary with a single BPF helper call~\cite{bpf-helpers}.
%
%% and use it across all the baselines (sharding, sharing, \SCR \qx{we
%% didn't use it for sharing, as it might be inefficient: sharing
%% requires lock the state, and we have to lock the enire cuckoo map,
%% as BPF only allows to use spinlock for one BPF map element (ie, one
%% cucokoo map) but not inside an element.}).
%% TODO: add a footnote to clarify this after finding space.
%
The packet fields in the key determine how RSS must be configured:
packets having the same key fields must be sent to the same CPU
core. However, today's NICs do not allow RSS to steer packets on
arbitrary sets of packet
fields~\cite{automatic-parallelization-nsdi24}. We pre-process the
trace to ensure that RSS hashing indeed shards the flow state
correctly. \forcameraready{For example, on the NIC in
  our testbed, the source ({\ct srcip}) and destination ({\ct dstip})
  IP addresses may be used together, but not separately, as the RSS
  key to hash a packet to a core. For a program that maintains flow
  state at the granularity of {\ct dstip}, an RSS key ({\ct srcip,
    dstip}) could steer two packets with the same {\ct dstip} but
  different {\ct srcip} to different CPU cores, violating sharding at
  the granularity of {\ct dstip}. To prevent this, and to evaluate our
  sharding baselines fairly, we pre-process our traces (\eg modifying
  packets such that every {\ct srcip, dstip} combination in the trace
  hashes to a core that only depends on {\ct dstip}) to ensure that
  RSS hashing indeed shards the flow state correctly.}
%
%% To be fair to
%% sharding-oriented baselines, we process our input traces to ensure
%% that a superset of the key fields may be used to configure RSS to
%% steer packets correctly for our trace, indeed sharding the program
%% state. 
For the connection tracker, since both directions of the
connection must go to the same CPU core, we use the keyed hash
function prescribed by symmetric RSS~\cite{symrss-TR}.

\subsection{Multi-Core Throughput Scaling}
\label{sec:eval-throughput-scaling}

In this section, we compare the MLFFR throughput
(\Sec{eval-experiment-setup}) of several packet-processing programs
(\Tab{application-characteristics}) scaled across multiple cores using
four techniques: SCR (\Sec{design}), state sharing with
packets sprayed evenly across all cores, sharding using RSS, and
sharing using RSS++
(\Sec{motivation-background}). Since the TCP connection tracking
program requires packets from the two directions of the connection
to be aligned, we evaluated it on a synthetic but
realistic hyperscalar data center trace
(\Sec{eval-experiment-setup}). For the rest of the programs, we
report results from real university data center and Internet backbone
traces.

\begin{figure*}
  %\vspace{-5mm}
  \centering
  \includegraphics[width=\textwidth]{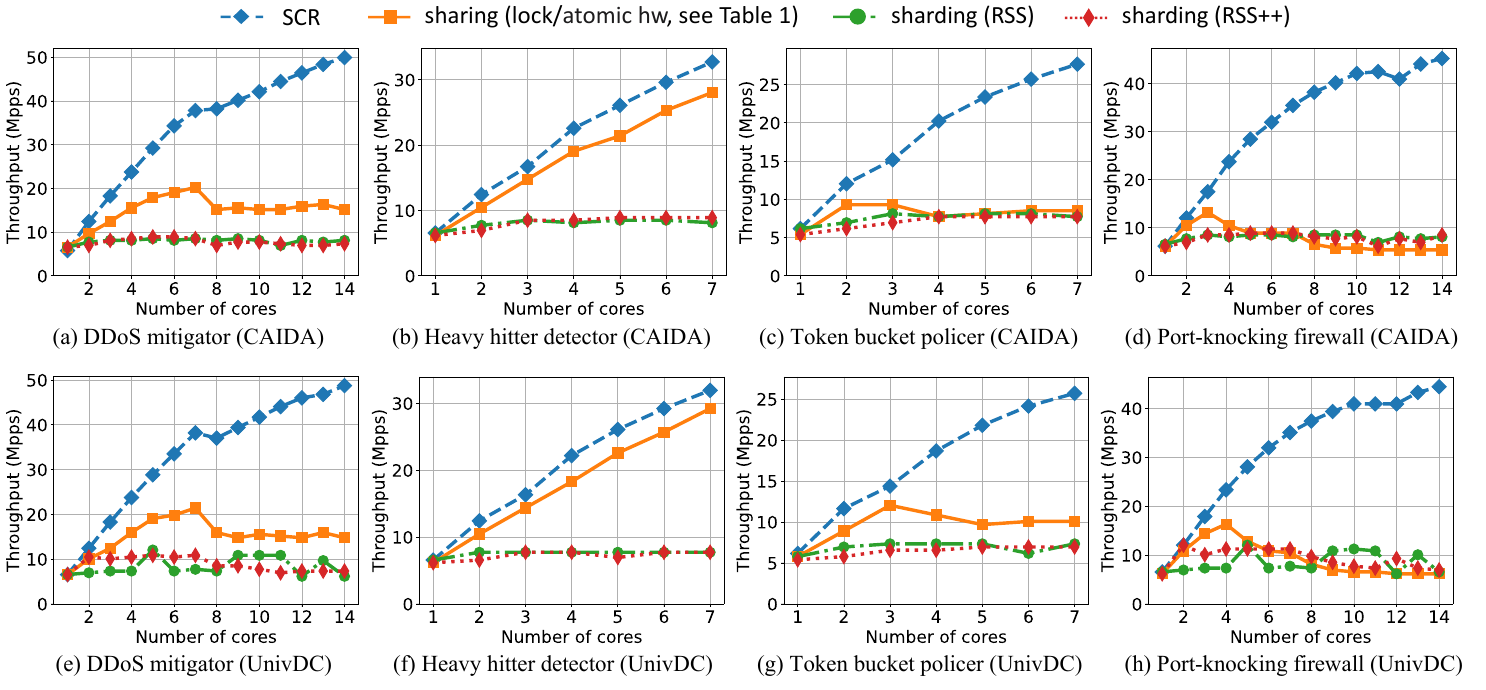}
  %%\vspace{-2.5mm}
  \caption{Throughput (\Sec{eval-experiment-setup}) in millions of
    packets per second (Mpps) of four stateful packet-processing
    programs implemented using state-compute replication
    (\Sec{design}), shared state, and sharding
    (\Sec{motivation-background}).
    %% Points denote averages and error
    %% bars denote standard deviations across 3 runs.
    Packet traffic is replayed from real data center and Internet backbone
    traces.}
  \label{fig:throughput-traces}
  \vspace{-5mm}
\end{figure*}

We have ensured that these experiments reflect a fair comparison of
CPU packet-processing efficacy. First, we truncated the packets in the
traces to a size smaller than the full MTU, to stress CPU performance
with a high packets/second (Mpps) workload
(\Sec{scr-principle}). Further, we fix the packet sizes used across
all baselines for a given program. \nop{since feeding packets of
  different sizes to the program for the same fixed packets/second
  arrival rate may induce bottlenecks other than the CPU (we show such
  experiments later).}  We used a fixed packet size of 256 bytes for
the connection tracker and 192 bytes for the others. The packet size
limits the number of items of history metadata that can be piggybacked
on each packet. Since the metadata size changes by the program
(\Tab{application-characteristics}), the maximum number of cores we
can support for a fixed packet size also varies by program (we
support 7 cores for the token bucket, heavy hitter detector, and
connection tracker, and 14 for the DDoS mitigation and port-knocking
firewall).

\Para{Throughput results.} 
\Fig{throughput-traces} and \Fig{throughput-conntrack-dctcp} show the
throughput as we increase the number
of packet-processing cores. \SCR is the only multi-core scaling
technique that can monotonically scale the throughput of all the stateful
packet-processing programs we evaluated across multiple cores,
regardless of the flow size distribution (\Sec{goals}). The throughput
for \SCR increases
linearly across cores in all of the configurations we tested. Somewhat
surprisingly, \SCR provides even better absolute performance than
hardware atomic instructions in the case of the heavy hitter
and DDoS mitigation programs. However, the performance of lock-based
sharing falls off catastrophically with 3 or more cores.

\begin{figure}
  %%\vspace{-5mm}
  \centering
  \includegraphics[width=0.25\textwidth]{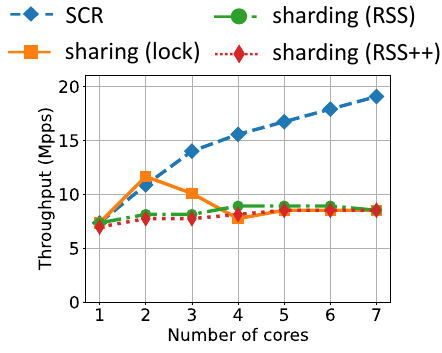}
  \vspace{-2mm}
  \caption{Throughput of TCP connection tracking
    parallelized using four techniques, \SCR (\Sec{design}), shared
    state, sharding with RSS, and sharding with
    RSS++~\cite{rss++-conext19}, on a hyperscalar data center trace
    (\Sec{eval-experiment-setup}). }
  %%\vspace{-6mm}
  \label{fig:throughput-conntrack-dctcp}
\end{figure}
%\vspace{-6mm}

The throughput of sharding using RSS depends on the vagaries of how
the RSS hash function steers flows to cores. RSS can neither split a
single elephant flow, nor does it intelligently redistribute elephant
flows to balance load across CPU cores. RSS++ indeed redistributes
flows across cores to adapt to high workload on some cores.  Such
redistribution can sometimes confer benefits over RSS (for example,
see \Fig{throughput-traces}(e) and (h)).  However, fundamentally,
flow-based sharding is ineffective when the workload is highly skewed
with elephant flows, as many real traffic workloads are
(\Fig{flow-size-distributions}).  Moreover, our results show that
RSS++ is not always better than simple RSS.  Re-balancing load by
migrating a flow shard across cores requires bouncing the cache
line(s) containing the flow states across cores, an action that can
impede high performance if done too frequently. On the other hand,
rebalancing infrequently may reintroduce skew across cores, since the
future load from flow shards may drift from their current load
(the basis for load rebalancing in RSS++).

%% NG: We no longer have a separate RSS++ comparison (Yay!)
%
% \input{eval-rss++}
%% \Para{Comparison with RSS++.} We compared \SCR against
%% RSS++~\cite{rss++-conext19} to evaluate whether a state-of-the-art
%% sharding solution scales better by rebalancing work across cores upon
%% heavy load.  Due to a configuration issue in RSS++, we could only run
%% RSS++ on CPU cores that were on a NUMA node different from that of the
%% NIC. We were only able to compare \SCR and RSS++ both running on these
%% far-NUMA cores, with one program, TCP connection
%% tracking. \Fig{throughput-rsspp} shows the results.  Fundamentally,
%% flow-level sharding is ineffective when the workload is highly skewed
%% with elephant flows, as many real traffic workloads tend to be
%% (\Fig{flow-size-distributions}).

\Para{Why does \SCR scale better than the other techniques?}
\Fig{pcm-explanations} shows detailed performance metrics measured
from Intel's performance counter monitor (PCM~\cite{intel-pcm}) and
BPF profiling~\cite{bpf-profiling}. We measure the L2 cache hit
ratios, instructions retired per cycle (IPC), and the program's
computation latency (only the XDP portion, excluding the \dispatch
functionality in the driver), as the load offered to the system
increases, when our token bucket policer is run across different
numbers of cores (2, 4, or 7). The numbers show the averages for these
metrics across the cores running the program. Error bars for IPC show
the min and max values across cores.
\forcameraready{IPC is a meaningful metric to
evaluate ``CPU goodput'' for eBPF/XDP programs: unlike high-speed
packet processing frameworks like DPDK which poll the NIC and exhibit
persistently high IPC~\cite{routebricks-sosp09}, eBPF/XDP drivers
adapt CPU usage to load through a mix of polling and interrupts.}

%\captionsetup[figure]{font={normalsize}}
\begin{figure}
  %\vspace{-5mm}
  \centering
  \includegraphics[width=0.48\textwidth]{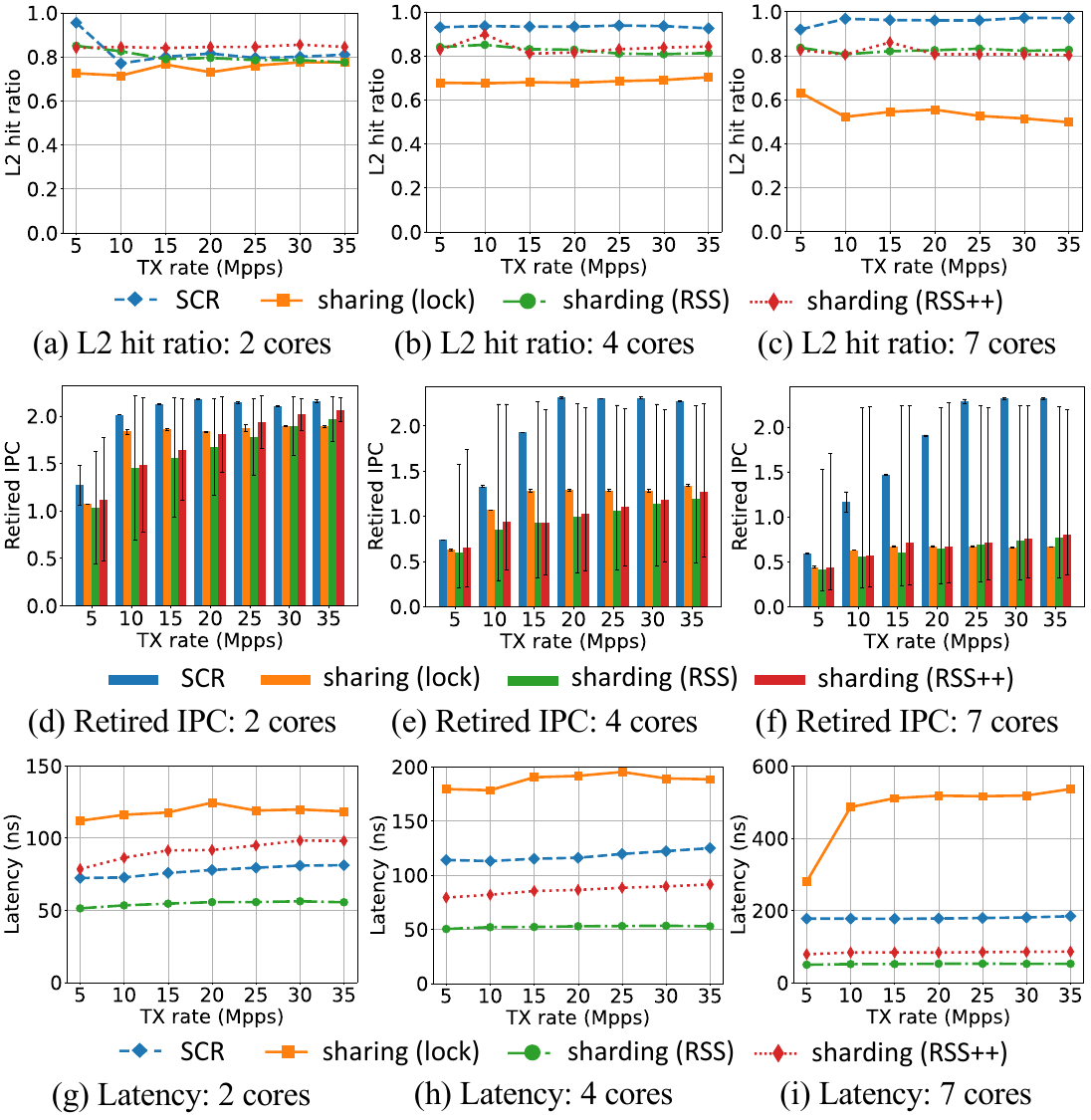}
  %%\vspace{-2.5mm}
  \caption{{Hardware performance metrics drawn from Intel PCM while
    executing the token bucket program. As the offered load
    increases, we show the program's compute latency (measured purely
    for the XDP portion), the L2 hit ratio, and the number of
    instructions retired per CPU clock cycle (IPC), when the program
    is scaled to 2, 4, or 7 cores. Packet traffic is from a
    university data center (\Sec{eval-experiment-setup}).}}
  \label{fig:pcm-explanations}
  \vspace{-5mm}
\end{figure}

Lock-based sharing in general suffers from lower L2 cache hit ratios
((a)--(c)) and higher latencies ((g)--(i)) due to lock and cache line
contention across cores---a trend that holds as the offered load
increases and also with additional cores at the same offered load. As
we might expect, IPC increases with the offered load ((d)--(f)), since
the cores get busier with packet processing. While the sharding
approaches (RSS and RSS++) effectively use the CPU with a high average
IPC for 2 cores, their average IPC values drop significantly with
additional cores, with large variation across cores (see error bars).
This indicates a severe imbalance of useful CPU work across cores:
Flow-affinity-based sharding approaches are unable to balance packet
processing load effectively across cores, leaving some cores idle and
others heavily used.  In contrast, \SCR has a consistently high IPC
with more cores and higher offered loads. \SCR has higher
packet-processing latency ((g)--(i)) than RSS-sharding since it needs
to process the history for each packet. \forcameraready{RSS++
  sometimes incurs higher compute latency than \SCR due to its need to
  monitor per-shard load, which requires additional memory
  operations.} However, its more effective usage of the CPU cores
results in better throughput (\Fig{throughput-traces}(g)).

\begin{figure}
  \vspace{-3mm}
  \centering
  \subfloat[Packets/second (1 rxq)]{
    \begin{minipage}[t]{0.155\textwidth}
    \includegraphics[width=\textwidth]{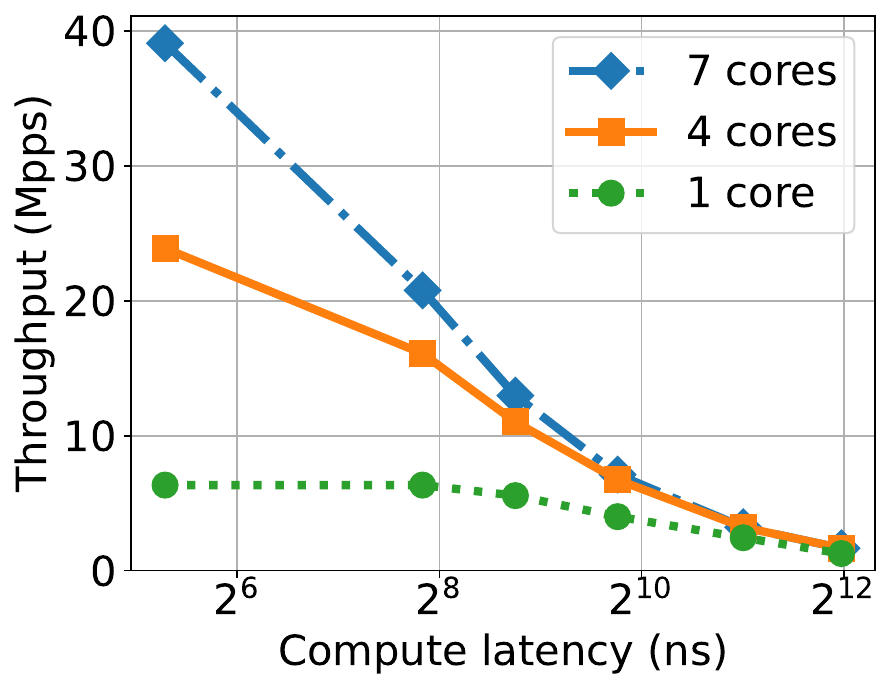}
    \label{fig:throughput-compute-time-increase-absolute-1rxq}
    \vspace{-4mm}
    \end{minipage}}
  \subfloat[Packets/second (2 rxq)]{
    \begin{minipage}[t]{0.155\textwidth}
    \includegraphics[width=\textwidth]{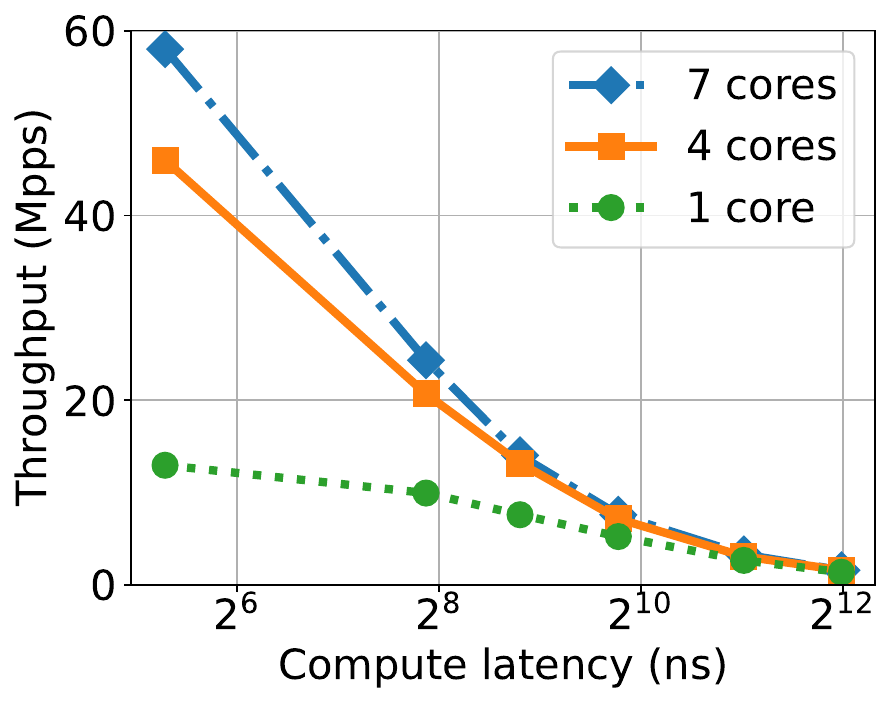}
    \label{fig:throughput-compute-time-increase-absolute-2rxq}
    \vspace{-4mm}
    \end{minipage}}
  \subfloat[Normalized to 1 core]{
    \begin{minipage}[t]{0.15\textwidth}
    \includegraphics[width=\textwidth]{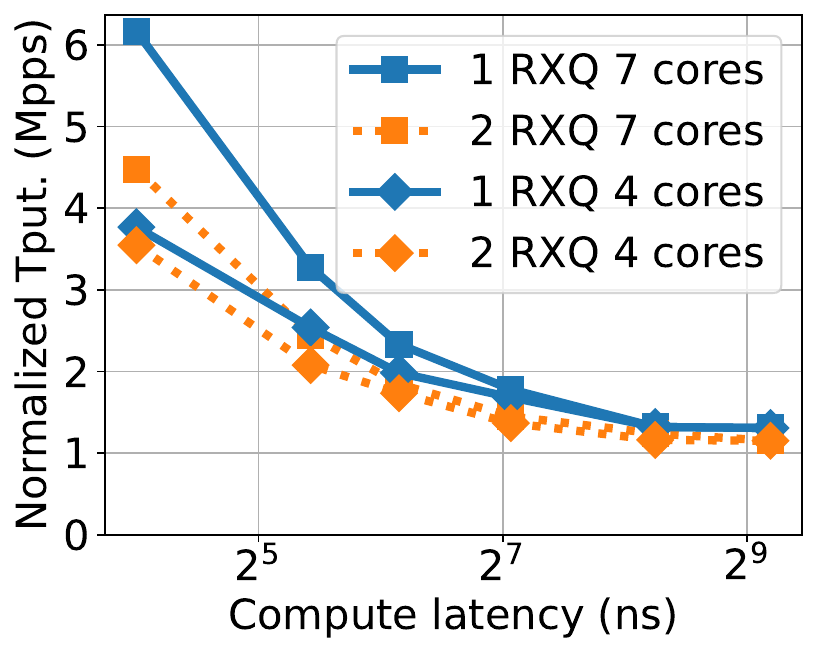}
    \label{fig:throughput-compute-time-increase-normalized}
    \vspace{-4mm}
  \end{minipage}}
  \vspace{-2.5mm}
  \caption{Evaluating the throughput scaling of a stateless program
    using \SCR, as the compute latency of the program varies
    but the dispatch latency remains constant, (a) in
    packets/second, and (b) normalized against single-core throughput
    at the same compute latency. As discussed in \Sec{scr-principle},
    the more the dispatch time dominates compute time, the more
    effective the multi-core scaling from \SCR. }
    \label{fig:compute-time-scr-limit}
    \vspace{-3mm}
\end{figure}

%% Prior attempt with three side-by-side figures
\nop{
\begin{figure}
  %\vspace{-5mm}
  \centering
  \subfloat[Compute time]{
    \begin{minipage}[t]{0.16\textwidth}
    \includegraphics[width=\textwidth]{figs/eval/compute_time/avg_mlffr.pdf}
    \label{fig:throughput-compute-time-increase-absolute}
    %%\vspace{-4mm}
    \end{minipage}} 
    % \hspace{0.01\textwidth}%
  \subfloat[Normalized (a) to single core]{
    \begin{minipage}[t]{0.16\textwidth}
    \includegraphics[width=\textwidth]{figs/eval/compute_time/avg_mlffr_normalized.pdf}
    \label{fig:throughput-compute-time-increase-normalized}
    %%\vspace{-4mm}
  \end{minipage}}
  \subfloat[Impact of larger histories]{
    \begin{minipage}[t]{0.16\textwidth}
    \includegraphics[width=\textwidth]{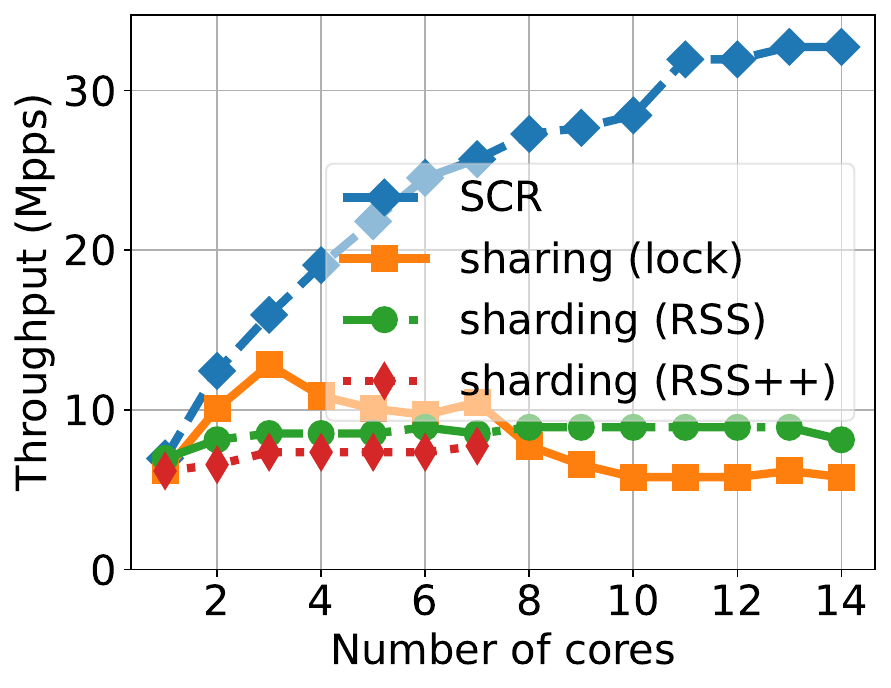}
    \label{fig:throughput-external-mlffr-tokenbucket}
    %%\vspace{-4mm}
  \end{minipage}}
    %%\vspace{-2.5mm}
    \caption{Evaluating the throughput scaling of a stateless program
    using \SCR, as the compute latency of the program varies
    (\ngs{Check $\rightarrow$ } the dispatch latency remains
    relatively constant), (a) in packets/second, and (b) normalized
    against the throughput of a single core at the corresponding
    compute latency. As discussed in \Sec{scr-principle}, \SCR
    provides more effective scaling the more dispatch time dominates
    compute time.     (c) Comparing the throughput of a token bucket policer
    parallelized using the same three techniques on the university
    data center trace (\Sec{eval-experiment-setup}), while truncating
    all packets in the trace to 64 bytes, and having only \SCR add
    metadata to packets before feeding them to the NIC.
 }
    \label{fig:throughput-numbers-scaling-limits-assorted}
    \vspace{-5mm}
\end{figure}
}

\Para{Limits to \SCR scaling.}
{\revaddlargebegin
\SCR suffers from two kinds of scaling limitations. 
First, as discussed in \Sec{scr-principle}, as the compute latency
increases in comparison to the dispatch latency, the effectiveness of
\SCR's multi-core scaling reduces.
\forcameraready{This is because more time is spent
``catching up'' state, incurring significant duplicated work across
CPU cores.}
We evaluate how the throughput of a stateless program varies as the
compute latency of this program increases, shown both in
packets/second (\Fig{throughput-compute-time-increase-absolute-1rxq},
\Fig{throughput-compute-time-increase-absolute-2rxq}) and
normalized to the single-core throughput for that compute latency
(\Fig{throughput-compute-time-increase-normalized}). With a small
compute latency (left of the graph), using $N$ cores provides
$\approx N \times$ throughput relative to a single core, but this
relative benefit diminishes with increasing compute latency.
} %% end revaddlargebegin

{
\revaddlargebegin
Second, \SCR's attachment of histories to packets incurs
non-negligible byte/second overheads.
%% even if the packet/second throughput scales better.
%
Adding to the number of bytes per packet increases L3 cache pressure
due to higher DDIO cache occupancy~\cite{intel-ddio} and incurs
additional PCIe transactions and
bandwidth~\cite{pciebench-sigcomm18}. Further, when packet histories
are appended outside the NIC (\eg top-of-the-rack switch), \SCR may
saturate the NIC earlier than other approaches.  %% None of the
%% multi-core scaling techniques other than \SCR incur these consequences
%% of the byte/second overhead.
%
%% To evaluate how \SCR scales when accounting for this overhead, \ie
%% saturating the NIC earlier than other techniques and plateauing
%% earlier in packets/second processed,
%
We compare \SCR against the shared and sharded approaches, when \SCR
alone adds history metadata before packets are fed into the NIC while
the packets for the latter approaches are truncated to 64 bytes.
\Fig{throughput-numbers-assorted}(a) shows the throughput
of the token bucket program with the university data center
trace.
\nop{
Increasing
the packet size specifically for \SCR has at least three performance
consequences: it may saturate the NIC earlier for \SCR than other
baselines, it may result in additional L3 cache memory pressure due to
higher DDIO cache occupancy~\cite{intel-ddio}, and it may result in
PCIe bottlenecks~\cite{pciebench-sigcomm18} due to additional PCIe
transactions and consumption of PCIe bandwidth.
} % ends nop
After 11 cores, the CPU is no longer the bottleneck for \SCR. \nop{as the
memory bandwidth to cores on the far NUMA is saturated.} This prevents
\SCR from scaling to a higher packets/second throughput. Yet, \SCR
saturates at a throughput much higher than the other techniques.
}

\begin{figure}
  %%\vspace{2mm}
  \vspace{-2mm}
  \centering
  \includegraphics[width=0.48\textwidth]{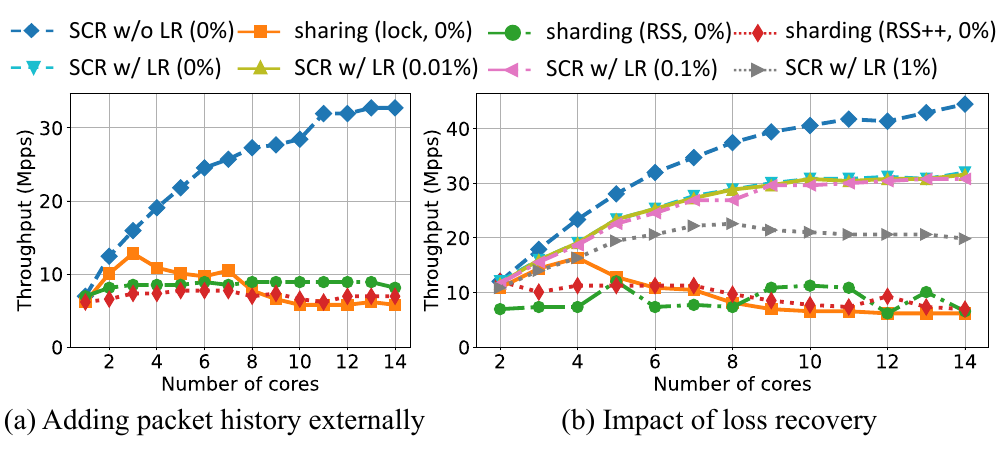}
  \vspace{-7mm}
  \caption{ (a) The throughput of a token bucket policer on the
    university data center trace (\Sec{eval-experiment-setup}), while
    truncating all packets in the trace to 64 bytes, with only \SCR
    adding metadata to packets before feeding them to the NIC.  (b)
    The throughput of a port-knocking firewall on the university data
    center trace. \SCR is run with and without loss recovery
    (\Sec{packet-loss-nondeterminism}) at multiple packet loss rates.}
    \label{fig:throughput-numbers-assorted}
    %%\vspace{-6mm}
\end{figure}

% Prior Attempt to two figures for scr scale limit and loss recovery
\nop{
\begin{figure}
  % \vspace{-2mm}
  \centering
  \subfloat[Adding packet history externally]{
    \begin{minipage}[t]{0.23\textwidth}
    \includegraphics[width=\textwidth]{figs/eval/univ1_64B_nopadding/xdp_token_bucket_mlffr.pdf}
    \label{fig:throughput-external-mlffr-tokenbucket}
    \vspace{-4mm}
  \end{minipage}}
  \subfloat[Impact of loss recovery]{
    \begin{minipage}[t]{0.23\textwidth}
    \includegraphics[width=\textwidth]{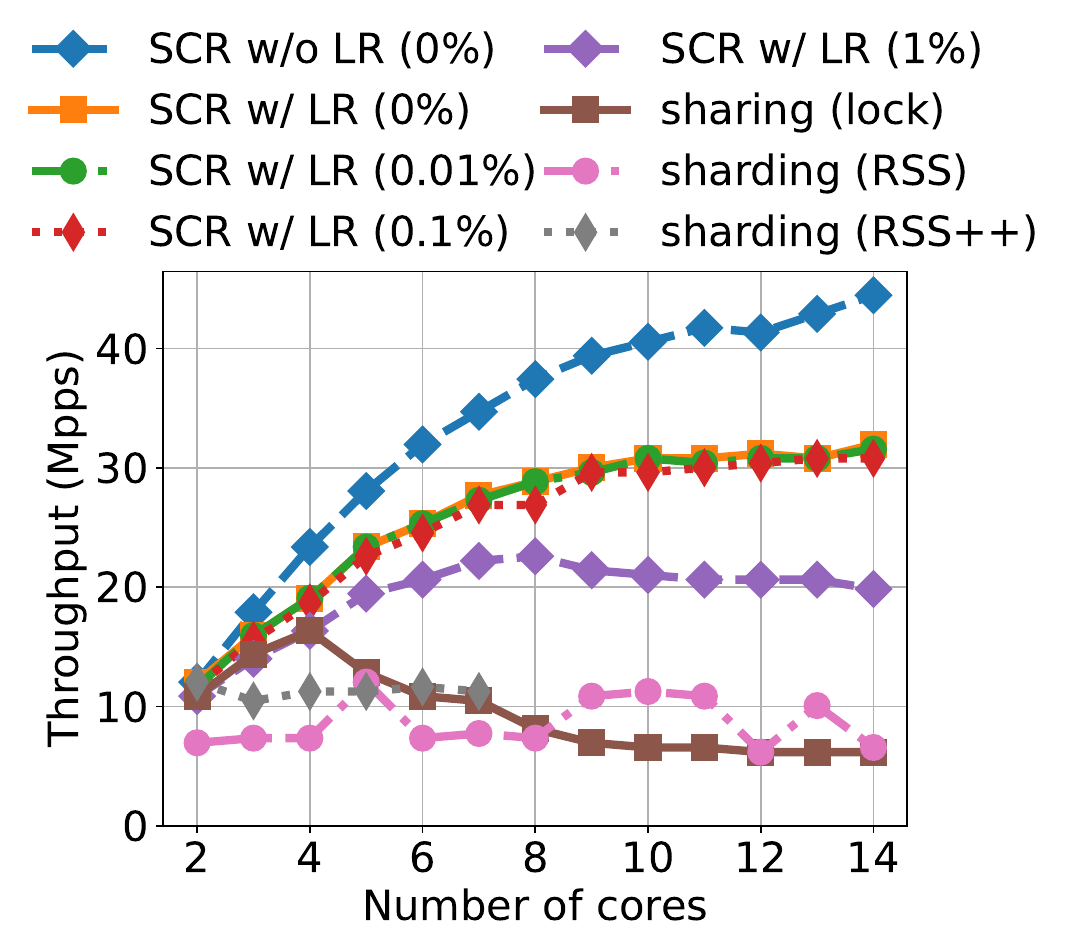}
    \label{fig:throughput-loss-recovery}
    \vspace{-4mm}
  \end{minipage}}
  \vspace{-2.5mm}
  \caption{
    (a) Comparing the throughput of a token bucket policer
    parallelized using various scaling techniques on the university
    data center trace (\Sec{eval-experiment-setup}), while truncating
    all packets in the trace to 64 bytes, and having only \SCR add
    metadata to packets before feeding them to the NIC.
    (b) Comparing \SCR with loss recovery against various scaling
    techniques. The port-knocking firewall is parallelized using \SCR
    (\Sec{design}) with and without loss recovery at multiple random
    packet loss rates, against shared state, sharding with RSS, and
    sharding with RSS++~, on the university data center trace.
  }
    \label{fig:throughput-numbers-assorted}
    %%\vspace{-6mm}
\end{figure}
}

%% Prior Attempt to do single figure for loss recovery
\nop{
\begin{figure}
  %%\vspace{-5mm}
  \centering
  \includegraphics[width=0.25\textwidth]{figs/eval/loss_recovery/xdp_portknock_univ1/avg_mlffr.pdf}
  \vspace{-2mm}
  \caption{Comparing \SCR with loss recovery against other scaling
      techniques. The port-knocking
    firewall is parallelized using \SCR (\Sec{design}) with and without
    loss recovery at multiple random packet loss rates, against shared
    state, sharding with RSS, and sharding with RSS++~, on the
    university data center trace (\Sec{eval-experiment-setup}). }
  \vspace{-6mm}
  \label{fig:throughput-loss-recovery}
\end{figure}
}

\Para{Overheads of \SCR's loss recovery handling.}  \revadd{We
  evaluate how \SCR's loss recovery algorithm impacts throughput with
  and without packet loss. \Fig{throughput-numbers-assorted}(b)
  compares a version of \SCR without incorporating the loss recovery
  algorithm, against a version that incorporates loss recovery at
  different artificially-injected random packet losses (0\%, 0.01\%,
  0.1\% and 1\%).  We also show the performance of existing scaling
  techniques (shared state, RSS, RSS++). The mere inclusion of the
  loss recovery algorithm impacts performance due to the additional
  logging operations.  Moreover, \SCR's throughput degrades with
  higher loss rate due to recovery-related synchronization
  (\Sec{packet-loss-nondeterminism}). However, \SCR still outperforms
  and outscales existing multi-core scaling techniques.  }

%% %% Possible Experiments to Check:
%% throughput with realistic flow size distributions
%% PCM explanations for ``why''
%% comparisons against rss++
%% Other smaller experiments:
%% - smaller number of flows/requests
%% - single flow scaling
%% - dependence on the flow size distribution?
%% any impact of packet reordering?
%% scaling limits
%% showing mainly: MLFFR experiments with fixed ``internal'' packet rate
%% later, consider possible MLFFR experiments with fixed external packet
%% rate ... especially with 200 Gbit/s nic
%% latency numbers?
%% numbers to show balanced CPU utilization
%% 2.2 motivational graphs ideas (gianni): how contention affects
%% performance for shared state, how cpu imbalance created for
%% rss/sharding in caida.

\subsection{Practicality of Sequencer Hardware}
\label{sec:eval-hardware-resource-use}

\begin{center}
  \begin{table}
    \resizebox{0.38\textwidth}{!}{
      \begin{tabular}{|l|c|c|c|c|c|c|}
        \hline
            {\bf Rows} & \multicolumn{3}{|c|}{{\bf LUT}} & \multicolumn{2}{|c|}{{\bf Flip-flops}} \\
            & {\bf Usage} &  {\bf Logic} & {\bf \%} & {\bf Usage} & {\bf \%} \\
            \hline
            16  & 1045 &  646 & 0.060 & 2369 & 0.069 \\
            32  & 1852 & 1444 & 0.107 & 3158 & 0.091 \\
            64  & 2637 & 2229 & 0.153 & 4707 & 0.136 \\
            128 & 3390 & 2982 & 0.196 & 7786 & 0.226 \\
            \hline
      \end{tabular}
    }
    \caption{Sequencer resource usage after synthesis into the
      NetFPGA-PLUS reference switch and meeting timing at 250 MHz.}
    \label{tab:verilog-synthesis-results}
  \end{table}
\end{center}
\vspace{-6mm}

%% #entries        LUT       LUTRAM     FlipFlop       LUT(‰)         FlipFlop(‰)
%% 16                1045           408           2369             0.6                  0.69
%% 32                1852           408           3158             1.07                0.91
%% 64                2637           408           4707             1.53                1.36
%% 128              3390           408           7786             1.96                2.26
%% LUT minus LUTRAM equals to the LUTLOGIC resource usage.

%%\vspace{-0.5cm}

We integrated our Verilog module implementing the sequencer
(\Sec{data-structure-packet-history}) into the
NetFPGA-PLUS~\cite{netfpga-plus} reference switch, which is clocked at
250 MHz with a data bus width of 1024 bits. We use the Alveo U250
board, which contains 1728000 lookup tables (LUTs) and 3456000
flip-flops.

We synthesized our sequencer design with different numbers of memory
rows (\Sec{data-structure-packet-history}), corresponding to the size
of the packet history (in number of packets). 
Each row is 112 bits long, enough to maintain a TCP 4-tuple and an
additional 16-bit value (\eg a counter, timestamp, \etc) for each
historic packet.  %% This is sufficient for many of the programs we
%% have tested (\Tab{application-characteristics}).
%
\Tab{verilog-synthesis-results} shows the resource usage. Our design
meets timing at 250 MHz, implying an achievable bandwidth of more than
200 Gbit/s. If each packet history metadata in the program is smaller
than a row (112 bits),\forcameraready{ parallelizing across $N$ cores
  requires $N$ rows; for such programs,} our design can meet timing
while scaling to 128 cores. The LUT and flip-flop hardware usage is
negligible compared to the FPGA capacity. \forcameraready{at all row
  counts measured.} We believe that our sequencer design may be
simple and cheap enough to be added as an on-chip accelerator
to a future NIC.

\revadd{ We have also implemented a stateful-register-based design of
  the sequencer on the Tofino programmable switch
  (\Sec{data-structure-packet-history}). The results are shown in
  \Tab{tofino-synthesis-results}. Our implementation was designed to
  use as many stateful registers and ALUs as possible (our design uses
  93\% on average across stages) to hold the largest number of bits of
  packet history.  Our design holds 44 32-bit fields, sufficient to
  parallelize (\Tab{application-characteristics}) the DDoS mitigator
  over 44 cores, the port-knocking firewall over 22 cores, the heavy 
  hitter and token bucket over 9 cores, or the connection tracker over
  5 cores. The small number of stateful ALUs on the platform, as well
  as the limit on the number of bits that can be read out from stateful
  memory into packet fields, restrict the Tofino sequencer from scaling
  to a larger number of CPU cores.  }

\begin{center}
  \begin{table}
    \resizebox{0.4\textwidth}{!}{
      %%\begin{tabular}{|l|c|c|c|c|c|c|}
      \begin{tabular}{|l|c|l|c|}
        \hline
            {\bf Resource} & {\bf Avg\%} & {\bf Resource} & {\bf Avg\%} \\
            \hline
            Exact match crossbars & 23.31\% & SRAM & 9.69\% \\
            VLIW instructions & 9.11\% & TCAM & 0.00\% \\
            Stateful ALUs & 93.75\% & Map RAM & 15.62\% \\
            Logical tables & 23.96\% & Gateway & 23.44\% \\
            \hline
      \end{tabular}
    }
    \caption{Resource usage (average \% across stages) of a Tofino
      implementation of the sequencer that uses as many stateful ALUs
      as possible to store packet history, amounting to 44 32-bit
      fields.}
    \label{tab:tofino-synthesis-results}
  \end{table}
\end{center}
\vspace{-14mm}

\forcameraready{

%% \section{Discussion}
\label{sec:discussion}

%% \Para{What is \SCR suited for?}

\Para{Limitations.}
\revadd{
\SCR is an effective technique to scale header-oriented
packet-processing programs across multiple CPU cores, when those
programs implement ``relatively simple''
computation with cores that are equally fast on average. 
%%  in comparison to the software \dispatch.
In its current form, \SCR may be ineffective when the compute time is
significant relative to dispatch (\eg TCP stack processing), when
programs look into packet payloads (since the byte/second overheads of
carrying packet history becomes significant), or when the path from
the sequencer to the cores is significantly lossy.
}

}

\nop{
\Para{Generalization.} We believe that \SCR is a general principle
that applies to more packet-processing frameworks than just XDP. In
particular, DPDK has similar software \dispatch
characteristics~\cite{xdp-conext18}. Further, the program
modifications required for \SCR, while simple, can be tedious for
large programs. Designing automatic parallelizing compilers for \SCR,
similar to recent efforts for sharding
programs~\cite{automatic-parallelization-nsdi24}, is an area ripe for
future work.

\Para{Memory footprint.} A key limitation of \SCR is its need to
replicate state across cores. For applications that require holding
several millions of flows in memory, this could result in the working
set of flow states overflowing the per-core caches, inducing more
cache misses and degrading performance. While we did not observe this
effect in our experiments, we believe that it can be mitigated by
choosing a solution that combines the best of sharding and \SCR---by
treating sets of cores as shards, and applying \SCR within the cores
of a single shard.

\Para{Applicability to commodity NICs and PCIe.} Our hardware
synthesis results show that the sequencer data structures are simple,
have negligible resource consumption, and can be clocked at high
rates. As such, we see no technical reason why they cannot be
incorporated into commodity fixed-function NICs. Adding history
metadata on each packet will increase the PCIe bus bandwidth
consumption. However, better bus bandwidth and technologies are on the
way~\cite{cxl, enso-osdi23}, as they are needed to support 200 Gbit/s
or faster NICs.
}

%% Memory footprint issue. 
%% (related) k-way scaling set with hardware support : combine sharding
%% and replication
%% what can nic vendors do? can regular fixed-function NICs incorporate
%% packet histories?
%% if nic implements it, then what are the factors to consider?
%% -- additional DMAs, but not additional line rate
%% developing a compiler for automatic parallelization
%% alternative decompositions of functionality between NIC and
%% cores. Could we do even more at the NIC? Parsing? offload parts of the
%% computation itself? etc.
%% GRO effects: not sending data further up efficiently when adding more
%% stuff to packet?
%% how would code modifications generalize to dpdk or other alternative
%% packet processing frameworks?
%% if NFV reads full packet, could be problematic

\vspace{2mm}
\section{Related Work}

\forcameraready{High-performance packet processing is a deeply studied
research area. We covered the works most closely related to \SCR in
\Sec{motivation-background}. Here, we discuss other related work.
}
%
%% We cover prior work related to \SCR not addressed
%% in \Sec{motivation-background}.

\Para{Frameworks for network function performance.} The problem of
scaling out packet processing is prominent in network function
virtualization (NFV), with frameworks such as
split/merge~\cite{split-merge-nsdi13}, openNF~\cite{opennf-sigcomm14},
and Metron~\cite{metron-nsdi18} enabling elastic scaling.  There have
also been efforts to parallelize network functions
automatically~\cite{automatic-parallelization-nsdi24} and designing
data structures to minimize cross-core
contention~\cite{conntrack-hpsr21}. These efforts are flow-oriented,
managing and distributing state at flow granularity.  In
contrast, \SCR scales packet processing for a single
flow. 

\Para{General techniques for software parallelism.} Among the
canonical frameworks to implement software
parallelism~\cite{autoparallelization-lncs12}, our scaling principles
are most reminiscent of Single Program Multiple Data (SPMD)
parallelism, with the program being identical on each core but the
data being distinct. The sequencer in \SCR makes the data
distinct for each core.
\forcameraready{even when the inputs to each core are closely
related (\ie overlapping packet history sequences). In \SCR, the data
is made distinct for each core by the sequencer.}

\Para{Parallelizing finite state machines.} A natural model of
stateful packet processing programs is as finite state automata (the
state space is the set of flow states) making transitions on events
(packets). There have been significant efforts taken to parallelize
FSM execution using
speculation~\cite{fsm-parallelization-scalability-sensitive-sc17,
  fsm-parallelization-path-fusion-speculation-asplos21} and data
parallelism~\cite{data-parallel-fsm-asplos14}. In contrast, \SCR
exploits replication. %% which also provides the benefit of parallel
%% scale-out for the software \dispatch overheads, with a small added
%% cost of duplicated computation across cores.

\Para{Parallel network software stacks.} There has been recent
interest in abstractions and implementations that take advantage of
parallelism in network stacks, for TCP~\cite{flextoe-nsdi22,
tas-eurosys19} and for end-to-end data transfers to/from user
space~\cite{netchannel-sigcomm22}. \SCR takes a complementary
approach, using replication rather than decomposing the program into
smaller parallelizable computations.

\section{Conclusion}
\label{sec:conclusion}

It is now more crucial than ever to investigate techniques to scale
packet processing using multiple cores. This paper presented
state-compute replication (\SCR), a principle that enables scaling the
throughput of stateful packet-processing programs monotonically across
cores by leveraging a packet history sequencer, even under realistic
skewed packet traffic. 

\nop{
  It is now more crucial than ever to investigate techniques to scale
packet processing across multiple cores. This paper presented
state-compute replication (\SCR), a principle that enables scaling
stateful packet-processing programs across cores by leveraging a
packet history sequencer, even for a single stateful flow. \SCR
provides throughput benefits with linear scaling as more cores are
added, for realistic packet traffic, packet-processing benchmarks, and
system configurations.
}

\nop{
Given the saturation of single CPU core performance and the emerging
improvements in network line speeds, it is now more crucial than ever
to find ways to scale packet processing over multiple cores. A
significant barrier to achieving such scaling is the existence of
state, memory that is updated across multiple packets in a flow. This
paper presented state-compute replication (\SCR), a scaling principle
that enables stateful packet-processing programs to scale the
processing for a single flow across cores. Key to achieving such
scaling is a sequencer, a simple hardware component that can piggyback
recent packet history on each packet sprayed across cores. Far from
being niche, \SCR provides throughput benefits with linear scaling as
more cores are added, for realistic packet traffic, packet-processing
benchmarks, and system configurations. We hope that these principles
and its improvements enable much necessary innovation in high-speed
packet processing.
}

\bibliographystyle{plain}
\bibliography{scr.bib}

\clearpage
\appendix
\section{Throughput Model}
\label{app:throughput-model}
This section provides the detailed description of the 
model to predict the throughput of \SCR (outlined 
in \Sec{operationalizing-scr}) and evaluates whether the actual 
throughput matches the model.

Suppose a system has $k$ cores, and each core can dispatch a single packet
in $d$ cycles, run a packet-processing program that computes over
a single packet in $c = c_1 + (k-1) \times c_2$ cycles, where $c_1$
is the time for processing the current packet and $c_2$ is the time for
state transition using one metadata. $c_2$ is smaller than $c_1$, as 
state-transition code is a code snippet extracted from the program which 
processes the current packet. For each piggybacked packet, the
total processing time is $d + c_1 + (k-1) \times c_2$, where $t = d + c_1$.
When $t$ dominates state-computation time (\ie $t \gg c_2$), with $k$
cores, the total rate at which externally-arriving packets can be
processed is $k \times \frac{1}{t + (k-1) \times c} \approx k /
t$. \Tab{throughput-model-parameters} lists the parameters we measured
for packet-processing applications we evaluated.
It shows that $t$ is 4.3 -- 9.7 times $c_2$.
Hence, it is possible to scale the packet-processing rate linearly
with the number of cores $k$.

We applied the parameters in \Tab{throughput-model-parameters}
to the throughput model and compared the predicted throughput to the
the actual throughput. \Fig{throughput-model-mlffr} shows they match well.
\begin{table}
  % 65rsspp version
  \centering
  \begin{tabular}{|l|c|c|c|c|}
    \hline
    {\bf Application} & {\bf $t$} & {\bf $c_2$} & {\bf $d$} & {\bf $c_1$} \\
    \hline
    % DDoS mitigator & 142 & 9 & 115 & 27 \\ % 7 cores
    DDoS mitigator & 114 & 15 & 104 & 10 \\ % 14 cores
    Heavy hitter monitor & 145 & 15 & 110 & 35 \\
    Token bucket policer & 156 & 21 & 104 & 53 \\
    % Port-knocking firewall & 139 & 12 & 111 & 28 \\ % 7 cores
    Port-knocking firewall & 107 & 18 & 97 & 11 \\ % 14 cores
    TCP connection tracking & 152 & 35 & 80 & 73 \\  
    \hline
  \end{tabular}
  \caption{The throughput model parameters (in nanoseconds) for packet-processing applications we evaluated.}
  \label{tab:throughput-model-parameters}
\end{table}

% \begin{table}
%   \centering
%   \begin{tabular}{|l|c|c|c|c|}
%     \hline
%     {\bf Application} & {\bf $t$} & {\bf $c_2$} & {\bf $d$} & {\bf $c_1$} \\
%     \hline
%     DDoS mitigator & 114 & 10 & 91 & 24 \\
%     Heavy hitter monitor & 86 & 16 & 55 & 31 \\
%     Token bucket policer & 90 & 21 & 39 & 51 \\
%     Port-knocking firewall & 106 & 12 & 79 & 27 \\
%     TCP connection tracking & 132 & 34 & 60 & 73 \\  
%     \hline
%   \end{tabular}
%   \caption{The throughput model parameters (in nanoseconds) for packet-processing applications we evaluated.}
%   \label{tab:throughput-model-parameters}
% \end{table}

\begin{figure}
  \vspace{-5mm}
  \centering
  \subfloat[DDoS mitigation]{
    \begin{minipage}[t]{0.16\textwidth}
    \includegraphics[width=\textwidth]{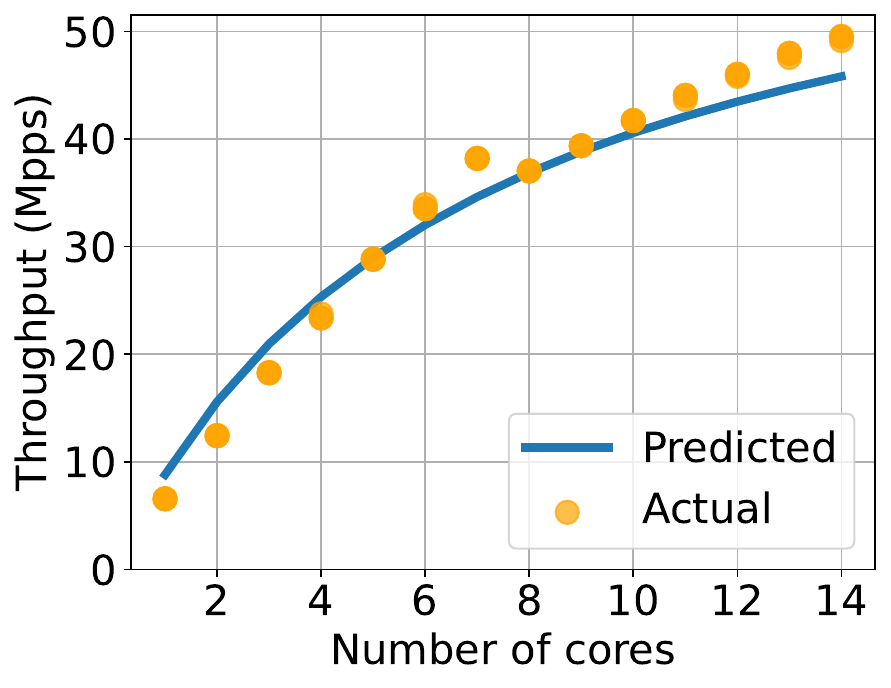}
    \label{fig:throughput-model-ddos-dc}
    \vspace{-4mm}
    \end{minipage}} 
    % \hspace{0.01\textwidth}%
  \subfloat[Heavy hitter detector]{
    \begin{minipage}[t]{0.16\textwidth}
    \includegraphics[width=\textwidth]{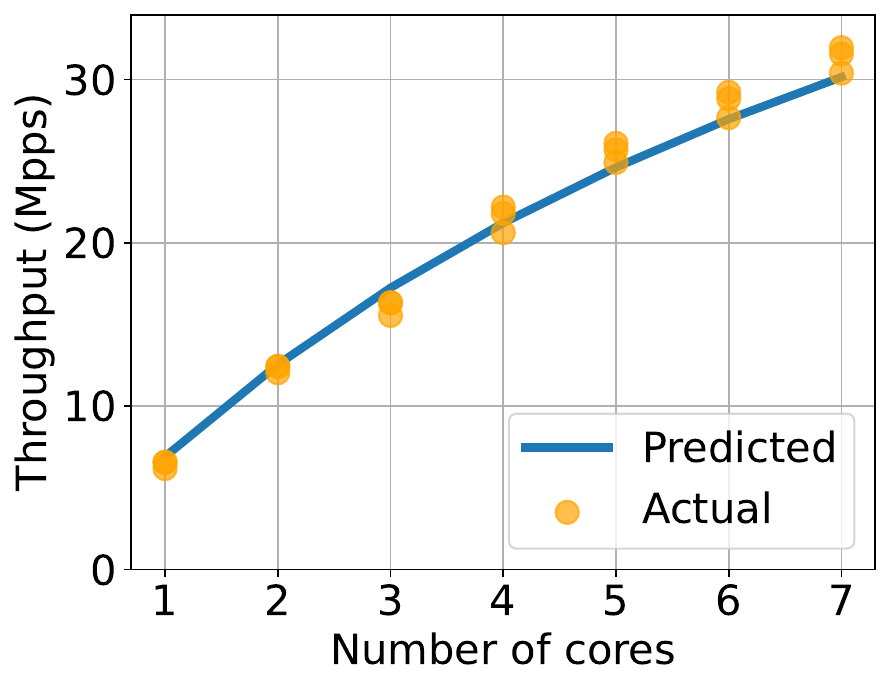}
    \label{fig:throughput-model-heavyhitter-dc}
    \vspace{-4mm}
    \end{minipage}}
  \subfloat[Token bucket policer]{
    \begin{minipage}[t]{0.16\textwidth}
    \includegraphics[width=\textwidth]{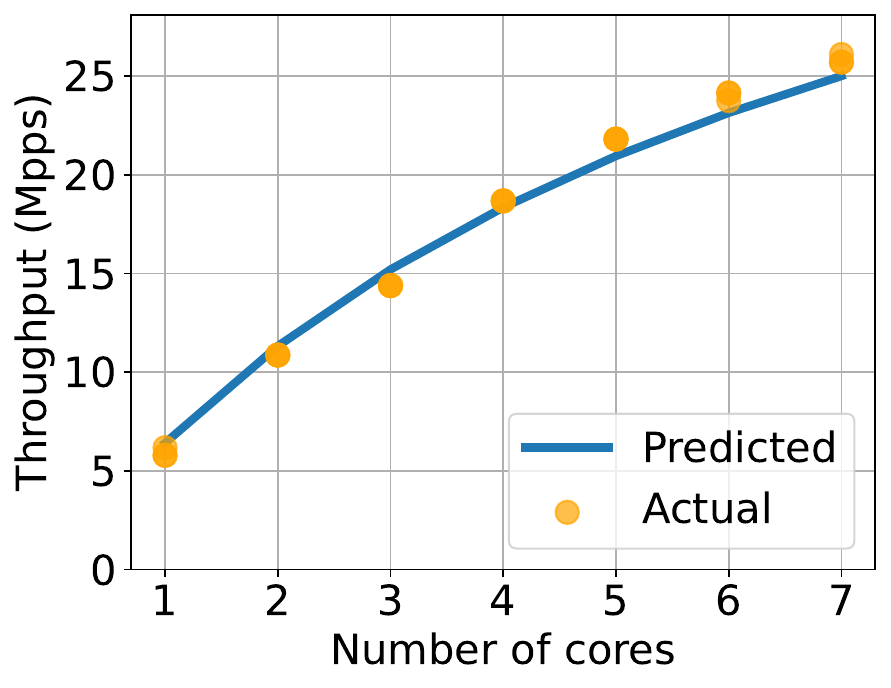}
    \label{fig:throughput-model-tokenbucket-dc}
    \vspace{-4mm}    
    \end{minipage}} \\
    % \hspace{0.1\textwidth}%
  \subfloat[Port-knocking firewall]{
    \begin{minipage}[t]{0.16\textwidth}
    \includegraphics[width=\textwidth]{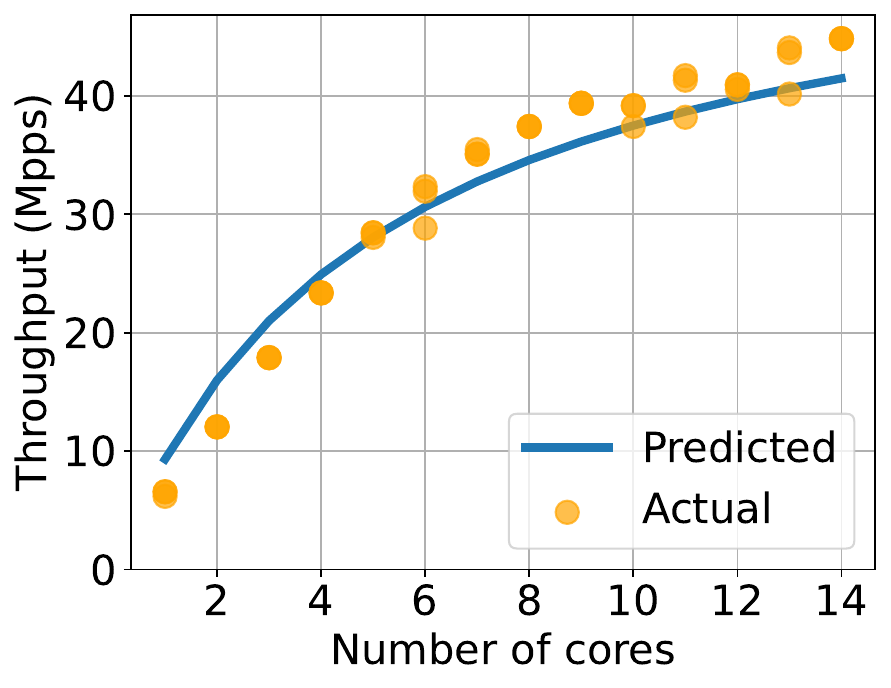}
    \label{fig:throughput-model-portknocking-dc}
    \vspace{-4mm}
    \end{minipage}}
  \subfloat[TCP connection tracking]{
    \begin{minipage}[t]{0.18\textwidth}
    \includegraphics[width=0.9\textwidth]{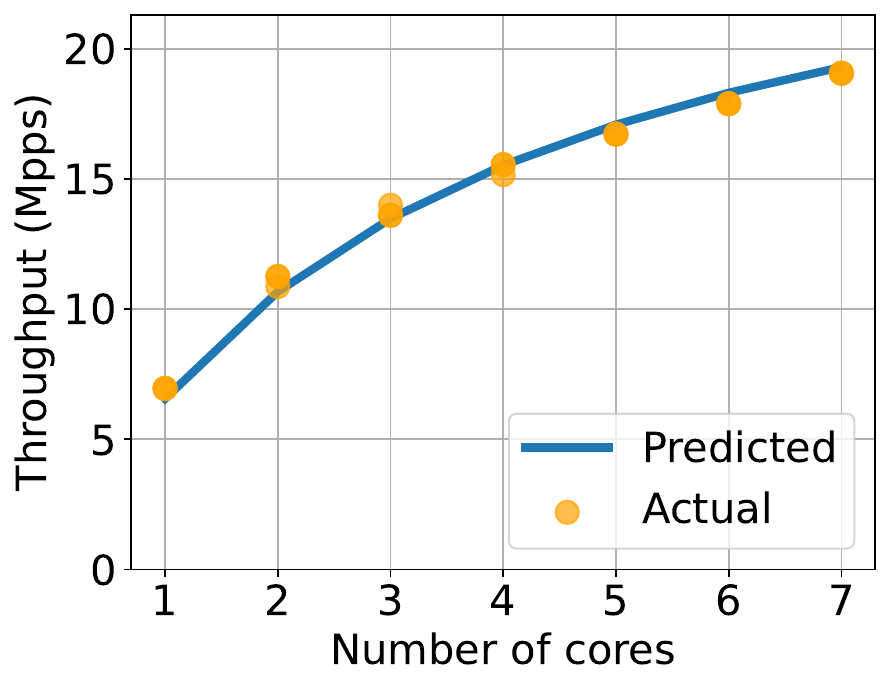}
    \label{fig:throughput-model-conntrack-dc}
    \vspace{-4mm}
    \end{minipage}}
  \caption{Predicted and actual throughput (\Sec{eval-experiment-setup})
    in millions of packets per second (Mpps) of five stateful packet-processing
    programs implemented using \SCR (\Sec{design}). The workloads of 
    (a)-(d) and (e) are from a university data center and a hyperscalar 
    data center (\Sec{eval-experiment-setup}) separately.}
  \label{fig:throughput-model-mlffr}
  \vspace{-5mm}
\end{figure}

\section{Loss Recovery Algorithm}
\label{app:loss-recovery-algorithm}

% \Alg{detect_lost_pkts} is the pseudo code of detecting lost packets. 
% For core $c$, we use a per-core static variable $max\_last\_pkt_c$ to keep track of the maximum packet number received by $c$, $maxp(sp)$. Upon receiving an $sp$, we detect packet loss by checking if ``packet number continuity" (line 2-3) holds, and update the lost packet sequence $lost\_pkts$ (line 4-6). Finally, we update $max\_last\_pkt_c$ to the maximum packet number in the current \SCR packet.
\renewcommand{\algorithmicrequire}{\textbf{Input:}}
\renewcommand{\algorithmicensure}{\textbf{Output:}}
\algnewcommand{\Initialize}[1]{%
  \State \textbf{Initialize:}
  \Statex \hspace*{\algorithmicindent}\parbox[t]{.8\linewidth}{\raggedright #1}
}
\algnewcommand\algorithmicforeach{\textbf{for each}}
\algdef{S}[FOR]{ForEach}[1]{\algorithmicforeach\ #1\ \algorithmicdo}
\algnewcommand\Continue{\textbf{continue}}
\algnewcommand\Break{\textbf{break}}
\begin{algorithm}
\caption{\SCR loss recovery}\label{alg:loss_recovery}
\begin{algorithmic}[1]
\Require the received \SCR pkt ($sp$), current core ($c$)
\Initialize{\strut$max[c] \gets 0$, $\forall i: log[c][i] \gets NOT\_INIT$}
\Function{$scr\_loss\_recovery$}{$sp$, $c$}
% \Ensure per-core log $log$
\State $maxseq \gets sp.seq\_num $
\State $minseq \gets max(1, maxseq-N+1) $
% \State $lost_{min}, lost_{max} \gets detect(maxseq, max[c])$
\For{$i \gets max[c] + 1$ to $maxseq$}
    % \State $info.lost$ $\gets$ true
    \If{$i < minseq$}
        \State $log[c][i] \gets LOST$
        \State $handle\_loss\_recovery(i, c)$
    \Else
        \State $history$ $\gets$ get history of $i$ from $sp$
        \State $log[c][i] \gets history$
    \EndIf
\EndFor

\State $max[c] \gets maxseq$

\EndFunction
\Statex
\Function{$handle\_loss\_recovery$}{$i$, $c$}
\State $C_{others} \gets C \setminus \{c\} $
% \color{blue}
\State $C_{lost} \gets \emptyset $
% \color{black}
\While{true}
    \ForEach {$c' \in \mathcal C_{others} $}
        \If{$log[c'][i]$ is $NOT\_INIT$}
            \State continue
        % \color{blue}
        \ElsIf{$log[c'][i]$ is $LOST$}
            \State $C_{lost} \gets$ $C_{lost} \cup \{c'\} $
            \If{$C_{lost} = C_{others}$}
                \Return
            \EndIf
         % \color{black}
        \Else
            \State $history \gets log[c'][i]$ 
            \State update state using $history$
            \State \Return
        \EndIf
    \EndFor
\EndWhile
\EndFunction
\end{algorithmic}
\end{algorithm}

% \begin{algorithm}
% \caption{$detect$}\label{alg:detect_lost_pkts}
% \begin{algorithmic}[1]
% \Require Max. seq \# in $sp$ ($maxseq$),
% Max. seq \# core $c$ has seen before $sp$ ($max[c]$)
% \Ensure lost regular pkt $\#$ range $lost_{min}$, $lost_{max}$
% \State $minseq \gets max(1, maxseq-N +1)$
% \State $lost \gets (max[c] + 1 < minseq)$
% \State $lost_{min} \gets 0, lost_{max} \gets 0$
% \If{lost = true}
%     \State $lost_{min} \gets max[c] + 1$
%     \State $lost_{max} \gets minseq - 1$
% \EndIf
% \State \Return $lost_{min}, lost_{max}$
% \end{algorithmic}
% \end{algorithm}

\newtheorem{lemma}{Lemma}
\newtheorem{theorem}{Theorem}
This section provides the detailed pseudocode and a proof 
of correctness of the packet loss recovery 
algorithm outlined in \Sec{packet-loss-nondeterminism}.

Before we start the correctness proof of loss recovery algorithm,
we define the following notations.
\begin{CompactEnumerate}
    \item $sp$: an \SCR packet. $sp_i$: the $i^{th}$ \SCR packet sent 
    from the sequencer to a core (please refer to \Sec{packet-format} 
    for more details).
    \item $p$: a regular packet. $p_i$: the $i^{th}$ regular packet 
    received by the sequencer (in original or as part of the packet
    history, please refer to \Sec{packet-format} for more details).
    \item $C$: the collection of all cores.
\end{CompactEnumerate}

We want to prove every core will not be deadlocked by loss recovery, 
\ie every core will \emph{start processing} (execute line 6)
and \emph{finish processing} (finish executing line 6-12) every regular packet,
under the conditions (1) each core will receive at least one \SCR packet 
after packet loss, (2) we have infinite memory, and (3) packet number 
monotonically increases.

\begin{theorem} \label{theorem:loss_recovery}
For any regular packet $p_i$, if every core has received $p_j$ ($j \ge i$), 
every core will finish processing $p_i$.
\end{theorem}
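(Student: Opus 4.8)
The plan is to prove this no-deadlock claim by induction on the sequence number, showing that every core finishes \emph{processing} sequence numbers $1, 2, \dots$ in increasing order (executing lines~6--12 of \Alg{loss_recovery} for each), up to and including sequence number $i$. First I would extract two structural facts from the pseudocode. \textbf{Fact~1 (publish before wait):} when a core $c$ begins processing a sequence number $k$, it writes $log[c][k]$ to either $LOST$ or $history[k]$ \emph{before} it can possibly invoke $handle\_loss\_recovery(k,c)$; moreover the outer loop index is non-decreasing and the algorithm never re-writes $NOT\_INIT$ after initialization, so $log[c][k]$ is written exactly once and, once written, is stable. \textbf{Fact~2 (the only blocking condition):} a call $handle\_loss\_recovery(k,c)$ returns as soon as every core $c' \neq c$ has $log[c'][k] \neq NOT\_INIT$ --- on the next full sweep of its inner loop it either encounters some core with $log[c'][k] = history[k]$ and returns, or it accumulates all of $C_{others}$ into the persistent set $C_{lost}$ and returns through the $C_{lost} = C_{others}$ test --- and, by Fact~1, once this condition holds it continues to hold. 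Hence the only way any core ever blocks is by waiting for another core to reach its own iteration for the very same sequence number $k$.

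With these in hand I would run the induction. Let $H(k)$ be ``every core finishes processing sequence number $k$''. Base case $H(1)$: the theorem's hypothesis gives that every core has received an \SCR packet with sequence number $\ge i \ge 1$, and upon processing its first received packet a core reaches the iteration for $k=1$ and, by Fact~1, publishes $log[\cdot][1]$ before it can block anywhere; once all cores have done so, Fact~2 makes every (possibly pending) call $handle\_loss\_recovery(1,\cdot)$ return, so $H(1)$ holds. Inductive step, $k \le i$: assuming $H(k-1)$, no core is permanently blocked at any iteration $\le k-1$, so each core drains its received \SCR packets in sequence-number order (no reordering, condition~(3)); since each core has received a packet with sequence number $\ge i \ge k$, each core eventually reaches its iteration for $k$ and, by Fact~1, publishes $log[\cdot][k]$ before it can block there. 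Thus there is a time after which all of $log[\cdot][k]$ are set, and from that point Fact~2 lets every $handle\_loss\_recovery(k,\cdot)$ return; therefore $H(k)$ holds. Taking $k=i$ is exactly the theorem. Along the way I would note that condition~(1) supplies the ``eventually receives a later packet'' step after a drop, and condition~(2) ensures the per-core log never overflows.

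The main obstacle --- and the point where a careless argument would wrongly predict deadlock --- is the apparent circular wait: core $c$ stalled inside $handle\_loss\_recovery(k,c)$ waits on core $c'$, which may itself be stalled inside $handle\_loss\_recovery(k,c')$ waiting on $c$. The resolution, which the induction is designed to expose, is that the object awaited is the single write $log[c'][k]$, and by Fact~1 that write occurs at the very start of $c'$'s iteration for $k$, strictly before $c'$ can block there; so by the time either core is inside $handle\_loss\_recovery(k,\cdot)$ both have already published their entry for $k$, and both calls return. Unrolled over sequence numbers, every wait at level $k$ is discharged using only progress at levels $\le k-1$, which the induction hypothesis already guarantees, so no genuine cycle exists. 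The same conclusion can be restated as a minimal-counterexample argument: were some core permanently blocked, take the least sequence number $k^*$ at which any core is permanently blocked; minimality forces every other core to eventually reach its iteration for $k^*$ and publish $log[\cdot][k^*]$, contradicting the block. The most error-prone part of the write-up will be stating Fact~2 precisely --- the interaction of the persistent $C_{lost}$ set with the monotone log entries across repeated sweeps, plus the mild assumption that a log write by one core eventually becomes visible to the readers --- but that bookkeeping is routine rather than deep.
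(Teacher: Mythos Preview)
Your proposal is correct and follows essentially the same approach as the paper: your Fact~1 and Fact~2 together constitute the paper's Lemma~\ref{lemma:induction} (``if all cores have started processing $p_i$, then all will finish''), and your induction on sequence number matches the paper's. Your write-up is actually more explicit than the paper's on the key mechanism---that line~7 or line~11 executes \emph{before} any call to \texttt{handle\_loss\_recovery}, so the apparent circular wait at level $k$ dissolves---and your minimal-counterexample restatement is a nice complement, but the underlying decomposition and inductive skeleton are the same.
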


\begin{proof}
We will firstly prove any core will process $p_1$ to $p_i$ in order.
The order of packets to process follows the order at line 5 
(from $max[c]+1$ to $maxseq$), \ie, after a core finishing 
processing $p_k$, it will start processing $p_{k+1}$.
Each core will be triggered to process all packets including $p_1$ to $p_i$,
since each core has received $p_j$ ($j \ge i$),

Given the order of packets to process, we now prove all cores can finish
processing $p_i$.
If $i=1$, all cores will start processing $p_1$ (line 6) and then finish 
processing $p_1$ (Lemma \ref{lemma:induction}).
If $i>1$, according to the order of a single core processing 
packets and Lemma \ref{lemma:induction}, we get the induction hypothesis 
that if all cores have started processing $p_k$, all cores will 
finish processing $p_{k}$ and start processing $p_{k+1}$.
Using induction hypothesis for $i$ times, all cores will 
start and then finish processing $p_i$.
\end{proof}

\begin{lemma} \label{lemma:induction}
For any regular packet $p_i$, if all cores have started processing $p_i$, 
then all cores will finish processing $p_i$.
\end{lemma}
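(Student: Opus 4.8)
The plan is to reduce the lemma to a single fact: the inner routine \textsc{handle\_loss\_recovery}$(i,c)$ always terminates. A core $c$ that \emph{received} the history of sequence $i$ ``finishes processing $p_i$'' by executing only the \texttt{else} branch (reading the history off the packet and writing $log[c][i]$); these are non-blocking steps, so such a core is done after finitely many of its own instructions. A core $c$ that \emph{lost} sequence $i$ instead writes $log[c][i] \gets LOST$ and then calls \textsc{handle\_loss\_recovery}$(i,c)$, so for it ``finishing processing $p_i$'' is exactly the return of that call. Hence it suffices to show every such call returns.

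The key intermediate claim I would establish is: \emph{once every core has started processing $p_i$, then after finitely many steps every core $c'$ has $log[c'][i] \neq NOT\_INIT$} --- equal to $history[i]$ if $c'$ received sequence $i$, and to $LOST$ otherwise. This holds because being at iteration $i$ of the \texttt{for} loop means $c'$ has already completed all earlier iterations (and their recovery calls), and its \emph{next} action for $i$ is the assignment to $log[c'][i]$, which occurs in the \texttt{else} branch, or in the \texttt{if} branch strictly \emph{before} the only potentially blocking statement (the call to \textsc{handle\_loss\_recovery}). Neither assignment blocks, so under the mild progress assumption accompanying Theorem~\ref{theorem:loss_recovery} (each core keeps taking steps because it eventually receives a fresh \SCR packet), all these entries become populated within a bounded prefix of the execution. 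The infinite-memory and monotone-sequence-number assumptions are used only to ensure $log[c'][i]$ is a well-defined, never-overwritten slot throughout.

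Given that claim, I would finish by a case split inside \textsc{handle\_loss\_recovery}$(i,c)$ for a lost core $c$. If some core $c'$ has $log[c'][i] = history[i]$, then within one pass of the \texttt{for each} scan over $C\setminus\{c\}$ the core reaches $c'$, applies the history to its private state, and returns. If \emph{no} core received sequence $i$, then every core in $C\setminus\{c\}$ has $log[\cdot][i] = LOST$, so each pass of the scan adds a new core to $C_{lost}$, and after at most $|C|$ iterations $C_{lost} = C_{others}$ and the routine returns. Either way the \texttt{while} loop exits, so $c$ finishes processing $p_i$. Combined with the fact that each core processes sequence numbers in strictly increasing order, this is precisely the inductive step used in the proof of Theorem~\ref{theorem:loss_recovery}.

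I expect the main obstacle to be ruling out a \emph{circular wait}: core $c$ spinning forever on $log[c'][i] = NOT\_INIT$ while $c'$ spins on $log[c][i]$, and so on around a cycle of cores. The argument I would emphasize is that no such cycle can form, because the entry $log[c'][i]$ that $c$ waits to read is written by $c'$ \emph{before} $c'$ can enter the only blocking operation (its own recovery call): the ``waits-for'' edges therefore point only from blocked cores to writes that are already guaranteed to occur under fairness, and never close into a dependency cycle. Making this write-before-block ordering precise against the exact line structure of \Alg{loss_recovery}, and isolating the minimal fairness hypothesis it needs, is the delicate part; the rest is bounded bookkeeping.
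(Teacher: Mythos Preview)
Your proposal is correct and follows essentially the same approach as the paper: split on whether core $c$ received or lost sequence $i$, and in the lost case argue that \textsc{handle\_loss\_recovery}$(i,c)$ terminates because every other core writes $log[c'][i]$ (to either $history[i]$ or $LOST$) \emph{before} it could possibly block, so no core can be stuck waiting on a $NOT\_INIT$ entry. The paper's proof is terser---it simply notes that if line~6 is executed for $p_i$ then line~7 or line~11 will be executed---whereas you spell out the write-before-block ordering and the two exit cases of the \texttt{while} loop more carefully, but the underlying argument is the same.
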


\begin{proof}
For any core $c$, no matter $p_i$ is lost or received,
$c$ will finish processing it. 

If $p_i$ is received by $c$, after $c$ updates $history[i]$ 
in its log (line 10-11), $c$ finishes processing $p_i$. 

If $p_i$ is lost at $c$ (detected at line 6), $c$ will wait for other cores to
update $p_i$ in their logs until $c$ gets $history[i]$ or confirm $p_i$ is 
lost at all of other cores (line 19-33). $c$ will not be deadlocked in waiting, 
since $p_i$ will be updated to $history[i]$ or $LOST$ in the logs of 
all cores who have started processing $p_i$ (if line 6 is executed for $p_i$, 
line 7 or line 11 will be executed).
\end{proof}

Note that logs are finite and sequence 
numbers wrap around in real system, but these can be handled with a 
sufficiently large log and sequence space, and we use values 1,024 and 
842,185 in our current implementation.

\section{\SCR-Aware Multi-Core Programming}
\label{sec:scr-programming}
\label{app:scr-programming}

\begin{wrapfigure}{r}{0.25\textwidth}
  %%\vspace{-5mm}
  \centering
  \includegraphics[width=0.24\textwidth]{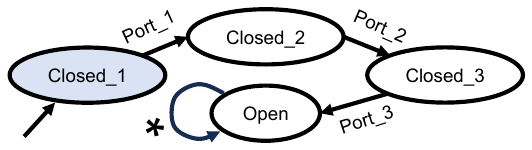}
  \vspace{-2mm}
  \caption{A state machine for a simple port-knocking firewall.}
  \vspace{-3mm}
  \label{fig:port-knocking-state-machine}
\end{wrapfigure}

Consider a packet-processing program developed assuming
single-threaded execution on a single CPU core. The question we tackle
in this subsection is: how should the program be changed to take
advantage of multi-core scaling with state-compute replication? We
walk through the process of adapting a program written in the eBPF/XDP
framework~\cite{xdp-conext18}, but we believe it is conceptually
similar to adapt programs written in other frameworks such as DPDK.

We describe the program transformations necessary for \SCR through a running
example. Suppose we have a port-knocking
firewall~\cite{port-knocking-firewall} with the state machine shown in
\Fig{port-knocking-state-machine}. The program runs a copy of this
state machine per source IP address. If a source transmits IPv4/TCP
packets with the correct sequence of TCP destination ports, then all
further communication is permitted from that source. All other packets
are dropped. Any transition not shown in the figure leads to the
default {\ct CLOSED\_1} state, and only the {\ct OPEN} state permits
packets to traverse the firewall successfully. A simplified XDP
implementation of this single-threaded firewall is shown below.

\begin{lstlisting}

/* Definition of program state */
struct map states {
  /* assume we define a dictionary with keys
     as source IP addresses and values as firewall
     states among CLOSED_{1, 2, 3} and OPEN. */
}

/* State transition function. See (*\Fig{port-knocking-state-machine}*) */
int get_new_state(int curr_state, int dport) {
  /* A function that implements the state machine for
     the port knocking firewall. */
  if (curr_state == CLOSED_1 && dport == PORT_1)
    return CLOSED_2;
  if (curr_state == CLOSED_2 && dport == PORT_2)
    return CLOSED_3;
  if (curr_state == CLOSED_3 && dport == PORT_3)
    return OPEN;
  if (curr_state == OPEN)
    return OPEN;
  return CLOSED_1;
}

/* The main function */
int simple_port_knocking(...) {
  /* Assume the packet is laid out as a byte array
     starting at the address pkt_start. Suppose the
     packet is long enough to include headers up to
     layer 4. First, parse IPv4/TCP pkts. */
  struct ethhdr* eth = pkt_start; // parse Ethernet
  int l3proto = eth->proto; // layer-3 protocol
  int off = sizeof(struct ethhdr);   
  struct iphdr* iph = pkt_start + off;
  int l4proto = iph->protocol; // layer-4 protocol
  if (l3proto != IPv4 || l4proto != TCP)
      return XDP_DROP; // drop non IPv4/TCP pkts
  int srcip = iph->src; // source IP addr 
  off += sizeof(struct iphdr);
  struct tcphdr* tcp = pkt_start + off;
  int dport = tcp->dport; // TCP dst port
  
  /* Extract & update firewall state for this src. */
  int state = map_lookup(states, srcip);
  int new_state = get_new_state(state, dport);
  map_update(states, srcip, new_state);

  /* Final packet verdict */
  if (new_state == OPEN)
      return XDP_TX; // allow traversal
  return XDP_DROP; // drop everything else
}
\end{lstlisting}

The program's state is a key-value dictionary mapping source IP
addresses to an automaton state described in
\Fig{port-knocking-state-machine}. The function {\ct get\_new\_state}
implements the state transitions. The main function, {\ct
  simple\_port\_knocking} first parses the input packet, dropping
all packets other than IPv4/TCP packets. Then the program fetches the
recorded state corresponding to the source IP on the packet, and
performs the state transition corresponding to the TCP destination
port. If the final state is {\ct OPEN}, all subsequent packets of that
source IP may traverse the firewall to the other side. All other
packets are dropped.

To enable this program to use state-compute replication across cores,
this program should be transformed in the following ways. We believe
that these transformations may be automated by developing suitable
compiler passes, but we have not yet developed such a compiler.

\noindent (1) {\em Define per-core state data structures and
  per-packet metadata structures.} First, the program's state must be
replicated across cores. To achieve this, we must define per-core
state data structures that are identical to the global state data
structures, except that they are not shared among CPU
cores. Packet-processing frameworks provide APIs to define such
per-core data structures~\cite{bpf-maps}.

Additionally, we must define a per-packet metadata structure that
includes any part of the packet that is used by the program---through
either control or data flow---to update the state corresponding to
that packet. For the port-knocking firewall, the per-packet metadata
should include the {\ct l3proto, l4proto, srcip,} and {\ct dport}.

The data structures that maintain packet history on the sequencer
correspond to this per-packet metadata (\Sec{pipeline-sequencer}).

\noindent {\em (2) Fast-forward the state machine using the packet
  history.} The \SCR-aware program must prepend a loop to ``catch up''
the state machine for each packet missed by the CPU core where the
current packet is being processed. By leveraging the recent history
piggybacked on each packet, at the end of this loop, the CPU core has
the most up-to-the-packet state.

\begin{lstlisting}
/* Assume the pointer `data' locates where the
   per-pkt metadata begins in the byte array of
   the packet ((*\Fig{packet-format}*)). Suppose `index'
   is the offset of the earliest packet (*\Sec{data-structure-packet-history}*),
   and NUM_META is the number of packets in the
   piggybacked history.
*/   
int l3proto, l4proto, srcip, dport, i, j;
for (j = 0; j < NUM_META; j++) {
  i = (index + j) % NUM_META; // ring buffer
  struct meta *pkt = data + i * sizeof(meta);
  l3proto = pkt->l3proto;
  l4proto = pkt->l4proto;
  srcip   = pkt->srcip;
  dport   = pkt->dport;
  if (l3proto != IPv4 || l4proto != TCP)
     continue; // no state txns or pkt verdicts
  /* Update state for this srcip and dport: */
  /* map_lookup; get_new_state; map_update. */
  /* Note: No pkt verdicts for historic pkts. */
}
pkt_start = data + NUM_META * sizeof(struct meta)
              + sizeof(index);
\end{lstlisting}

%% \ngs{simplify the `for' loop structure using a different counter
%%   variable}

A few salient points about the code fragment above. First, the
semantics of the ring buffer of packet history
(\Sec{pipeline-sequencer}) are implemented by looping over the packet
history metadata starting at offset {\ct index} rather than at offset
0. The decision to implement the ring buffer semantics in software
makes the hardware significantly easier to design, since only a small
part of the hardware data structure needs to be updated for each
packet (\Sec{data-structure-packet-history}). Second, the loop must
implement appropriate control flow before the state update to ensure
that only packets that should indeed update the flow state do. Note
that the metadata includes parts of the packet that are not only the
data dependencies for the state transition ({\ct srcip, dport}) but
also the control dependencies ({\ct l3proto, l4proto}). Third, no
packet verdicts are given out for packets in the history: we want the
program to return a judgment for the ``current'' packet, not the
historic packets used merely to fast-forward the state
machines. Finally, the code fragment conveniently adjusts {\ct
  pkt\_start} to the position in the packet buffer
(\Fig{packet-format}) corresponding to where the ``original'' packet
begins. The rest of the original program---unmodified---may process
this packet to completion and assign a verdict.

What is excluded in our code transformations is also crucial.  This
program avoids locking and explicit synchronization, despite the fact
that it runs on many cores, even if there is global state maintained
across all packets.

With these transformations, in principle, a packet-processing program
is able to scale its performance using state-compute replication
across multiple cores.

\end{document}